%% file: main.tex
\documentclass[journal=tches]{iacrtrans}
\input{setup.tex}

\usepackage{acro}
\input{acronyms.tex}

\author{Gustavo Banegas\inst{1} \and YoungBeom Kim\inst{2} \and Seog Chung Seo\inst{2} \\ \and Christine van Vredendaal\inst{3}}
\institute{
   LIX, CNRS, Inria, École Polytechnique, Institut Polytechnique de Paris, France
 \email{gustavo.souza-banegas@inria.fr}
 \and
 Kookmin University, Seoul, Republic of Korea, \email{{darania, scseo}@kookmin.ac.kr}
 \and
 NXP Semiconductors, The Netherlands, \email{christine.cloostermans@nxp.com}}

\title{Low-Stack HAETAE for Memory-Constrained Microcontrollers}

\begin{document}

\maketitle

\keywords{HAETAE \and lattice-based cryptography \and small memory signatures \and constrained devices}

  \makeatletter%
\begingroup%
 \makeatletter%
 \def\@thefnmark{$*$}\relax%
 \@footnotetext{\relax%
   Author list in alphabetical order; see \url{https://ams.org/profession/leaders/CultureStatement04.pdf}.
   This work was supported by the HYPERFORM consortium, funded by France through Bpifrance, and by the France
   2030 program under grant agreement ANR-22-PETQ-0008 PQ-TLS.
\def\ymdtoday{\leavevmode\hbox{\the\year-\twodigits\month-\twodigits\day}}\def\twodigits#1{\ifnum#1<10 0\fi\the#1}%
  Date of this document: \ymdtoday.%
 }%
\endgroup
\begin{abstract}
We present a low-stack implementation of the module-lattice signature scheme
\hae{}, targeting microcontrollers with
\SI{8}{\kilo\byte}--\SI{16}{\kilo\byte} of available SRAM.
On such devices, peak stack usage is often the binding constraint, and
\hae{}'s hyperball-based sampler, large transient polynomial
vectors, and variable-length signature payloads (hint and high-bits
arrays) pose a particular challenge.
To address this we introduce
(i)~\emph{Rejection-aware pass decomposition},
which isolates encoding to the post-acceptance path;
(ii)~\emph{Component-level early rejection},
which short-circuits the response computation when a partial norm
already exceeds the bound; and
(iii)~\emph{Reverse-order streaming entropy coding} using range
Asymmetric Numeral Systems (rANS), which eliminates full hint and
high-bits staging buffers.
Combined with streamed matrix generation, a two-pass hyperball
sampler with streaming Gaussian backend, and row-streamed
verification, these techniques bring Signing stack from
\SI{71}{\kilo\byte}--\SI{141}{\kilo\byte} in the reference
implementation down to
\SI{5.8}{\kilo\byte}--\SI{6.0}{\kilo\byte},
key generation to
\SI{4.7}{\kilo\byte}--\SI{5.7}{\kilo\byte}, and verification to
\SI{4.7}{\kilo\byte}--\SI{4.8}{\kilo\byte} across all three security levels.
Our pure~C implementation covers all three security levels
(\hae{}-2/3/5), whose optimization paths differ due to the
public-key domain ($d{>}0$ vs.\ $d{=}0$) and rejection structure.
We implement our optimization on a Nucleo-L4R5ZI and compare to the reference \texttt{pqm4} (for \hae{}-2 and -3) and a recently published memory-optimized
implementation (targeting \hae{}-5 only).
We reduce \hae{}-2, -3, and -5
stack by respectively 75, 86 and 8\,\% for
key generation, 92, 95 and 24\,\% for signature generation, and 85, 91 and 22\,\% for verification.
Depending on the parameter set, this impacts performance by at most a factor 1.8 and 3.4 for key and signature generation respectively, while even offering a performance improvement up to 18\,\% for verification.
Verification at all security levels fits within \SI{8}{\kilo\byte}
of RAM (signature buffer + stack) and is $2.34$--$3.34\times$ faster
than ML-DSA m4fstack at each comparable security level.
We additionally validate portability under RIOT-OS on ARM Cortex-M4
and RISC-V targets.
\end{abstract}

\input{introduction}

\input{background.tex}

\input{theoretical_eval.tex}

\input{techniques.tex}

\input{evaluation.tex}

\input{conclusion.tex}


%
\bibliographystyle{alpha}
\bibliography{biblio}

\appendix
\input{appendix.tex}

\end{document}

%% file: setup.tex
\usepackage{svg}
\usepackage{hyperref}
\usepackage{pifont}
\usepackage{algorithm}
\usepackage{algorithmicx}
\usepackage{algpseudocode}

\usepackage{listings}
\usepackage{graphicx}
\usepackage{subcaption}  
\usepackage{array}
\usepackage{siunitx}
\newcolumntype{Y}{>{\raggedright\arraybackslash}X}
\usepackage{arydshln}
\usepackage{booktabs}
\usepackage{multirow}
\usepackage{tabularx}
\usepackage{makecell}

\usepackage[autostyle=false, style=english]{csquotes}
\MakeOuterQuote{"}
\usepackage{tikz}
\usetikzlibrary{positioning}
\usetikzlibrary{shapes.geometric, arrows}
\usetikzlibrary{arrows.meta,calc}
\usepackage{arydshln} 

\tikzstyle{process} = [rectangle, minimum width=2cm, minimum height=0.7cm, text centered, draw=black] 
\tikzstyle{arrow} = [thick,->,>=stealth]

\newcommand{\stepn}[1]{\tikz[baseline=(c.base)]{\node[circle,draw,inner sep=0pt,minimum size=0.9em,line width=0.4pt](c){\scriptsize #1};}}

%% file: acronyms.tex
\newcommand{\pk}{\mathit{pk}}
\newcommand{\sk}{\mathit{sk}}
\newcommand{\spo}{{\bf s}}
\newcommand{\epo}{{\bf e}}

\newcommand{\Apo}{{\bf A}}
\newcommand{\apo}{{\bf a}}
\newcommand{\bpo}{{\bf b}}

\newcommand{\ntt}{\textsf{NTT}}
\newcommand{\invntt}{\ntt^{-1}}

\newcommand{\hae}{HAETAE}

\DeclareAcronym{nist}{short=NIST, long={National Institute of Standards and Technology}}

\DeclareAcronym{ntt}{
    short=NTT,
    long=Number Theoretic Transform
}
\DeclareAcronym{ecdsa}{
    short=ECDSA,
    long=Elliptic Curve Digital Signature Algorithm
}

\DeclareAcronym{kats}{short=KATs, long=Known-Answer Tests}

\DeclareAcronym{pqm4}{short=PQM4, long=Library that is a collection of post-quantum cryptographic algorithms for the ARM Cortex-M4}

\newcommand{\InitStream}{\texttt{InitStream}}

\newcommand{\SampleGauss}{\texttt{SampleGauss}}
\newcommand{\InvSqrt}{\texttt{InvSqrt}}

\newcommand{\round}{\texttt{Round}}

\newcommand{\tikzoffset}{2pt}

\newcommand\tikzmark[2]{%
	\tikz[remember picture,baseline] \node[inner sep=2pt,outer sep=0pt] (#1){#2};%
}

\newif\iftodoenabled
\todoenabledtrue

%% file: introduction.tex
\section{Introduction}
\label{sec:introduction}

Embedded systems rely on public-key authentication schemes for firmware updates,
secure boot, device attestation, and device provisioning.
With the transition to post-quantum cryptography, signature schemes based
on module lattices are leading candidates for standardization.

On low-end ARM Cortex-M and RISC-V microcontrollers, the stack
available to an application may be only
\SI{8}{\kilo\byte}--\SI{16}{\kilo\byte} once the real-time operating system (RTOS), network stack,
and I/O drivers are loaded.
For this class of targets, peak stack usage is often the binding
constraint rather than code size or cycle count.

\hae{}~\cite{NISTPQC-ADD-R1:HAETAE23,cheon2024haetae}, a
module-lattice signature scheme presented at CHES'24, achieves
shorter signatures and keys than ML-DSA~\cite{FIPS204} by sampling
ephemeral vectors from a hyperball rather than a hypercube.
This compactness reduces communication and persistent storage costs,
making \hae{} attractive for constrained deployments.
For its attractive qualities, it has been identified as a winner in the Korean domestic post-quantum cryptography competition, KpqC~\cite{KPQC_Exhibit}. 

\paragraph{Motivation.}
Although \hae{} offers smaller public keys and signatures than ML-DSA
at comparable security levels (Table~\ref{tab:haetae-params}), the
reference implementation requires \SI{71}{\kilo\byte} to
\SI{141}{\kilo\byte} of peak stack during the signing operation, depending on the security
level, far exceeding the RAM budget of typical embedded targets.
The root cause is that multiple large polynomial vectors used in
sampling, challenge derivation, and hint computation are kept live
simultaneously.
The hyperball sampler amplifies the problem: because the scaling
factor depends on the squared norm of the full Gaussian sample, a
one-pass implementation must buffer the entire sample before producing
any output.

This issue extends beyond signing.
In modern embedded systems, key generation may also be performed at
runtime for ephemeral credentials or key rotation, and verification
must complete within tight stack budgets on sensor nodes that only
verify signatures.
A practical low-memory implementation should therefore address
key generation, signing, and verification alike.

\paragraph{Related work.}
Bos, Renes, and Sprenkels~\cite{DBLP:conf/africacrypt/BosRS22}
showed that ML-DSA can fit within a few kilobytes of stack by trading
recomputation for peak memory, streaming seed-derived objects, and
reusing lifetime-disjoint buffers; similar strategies have been applied
to FrodoKEM~\cite{bos2023enabling}.
For \hae{}, however, only streaming matrix generation and sparse
challenge multiplication carry over directly; the hyperball sampler,
variable-length rANS encoding, and hint computation remain unaddressed
by~\cite{DBLP:conf/africacrypt/BosRS22}.
To the best of our knowledge, \cite{cryptoeprint:2026/442} is the
state-of-the-art memory-optimized \hae{} implementation on Cortex-M4.
Following~\cite{DBLP:conf/africacrypt/BosRS22}, it achieves
\SI{5212}{\byte}/\SI{8092}{\byte}/\SI{6220}{\byte} (key
generation/signing/verification) for \hae{}-5, approaching the
idealized baselines of Section~\ref{sec:theory}.
However, it targets \hae{}-5 only, leaving \hae{}-2/3
($d{>}0$, different public-key domain and rejection structure)
unaddressed, and its verification stack (\SI{6220}{\byte}) plus the
signature (\SI{2948}{\byte}) totals \SI{9168}{\byte}, exceeding the
\SI{8}{\kilo\byte} budget of many constrained devices.
Fitting within this budget requires optimizations beyond general
streaming, targeting \hae{}-specific components that prior work
leaves unexplored.

\subsection{Contributions}

\paragraph{Support for all \hae{} security levels.}
We address both gaps identified above: our implementation supports
all three security levels (\hae{}-2/3/5), with separate optimization
paths for the $d{>}0$ (\hae{}-2/3) and $d{=}0$ (\hae{}-5) cases
(Section~\ref{sec:impl:lowstack-keygen}), and fits verification
within the \SI{8}{\kilo\byte} budget at every level.
Our pure~C implementation achieves
\SI{4.7}{\kilo\byte}--\SI{5.7}{\kilo\byte} key generation stack,
\SI{5.8}{\kilo\byte}--\SI{6.0}{\kilo\byte} signing stack, and
\SI{4.7}{\kilo\byte}--\SI{4.8}{\kilo\byte} verification stack
across all levels, with
\texttt{.data}\,$=$\,0 and \texttt{.bss}\,$=$\,0.

\paragraph{Novel stack optimization techniques.}
We introduce optimization techniques tailored to \hae{}'s algorithmic
structure that push the measured stack below the idealized streaming
baselines (Section~\ref{sec:theory}).
For signing (Section~\ref{sec:impl:lowstack-sign}), we introduce
(i)~\emph{Rejection-aware pass decomposition}, which confines large
buffers to non-overlapping lifetime phases;
(ii)~\emph{Component-level early rejection}, which short-circuits the
response computation when a partial norm exceeds the bound; and
(iii)~\emph{Reverse-order streaming entropy coding}, which eliminates
full hint and high-bits staging arrays.
The signing path additionally employs a streaming two-pass hyperball
sampler with a fully streaming Gaussian backend
(\texttt{.bss}\,$=$\,0), for which we provide a formal correctness
proof (Section~\ref{sec:sampler}).
For verification (Section~\ref{sec:impl:lowstack-verify}), we combine
row-streamed matrix multiplication with view-style decoding, replacing
a large vector accumulator with a single polynomial and a compact
union overlay.
For \hae{}-5, these yield \SI{4816}{\byte} key generation
($-7.6$\,\% vs.~\cite{cryptoeprint:2026/442}),
\SI{6136}{\byte} signing ($-24$\,\%), and \SI{4840}{\byte}
verification ($-22$\,\%), while key generation, signing, and
verification are 25\,\%, 27\,\%, and 16\,\% faster respectively.

\paragraph{Evaluation and practical applicability.}
We evaluate on the \texttt{pqm4} framework and additionally validate
portability under RIOT-OS on Cortex-M4 (nRF52840) and RISC-V
(ESP32-C6) targets.
On \SI{8}{\kilo\byte} devices, verification at all \hae{} security
levels fits within budget (signature buffer + stack), and is
$2.34$--$3.34\times$ faster than
ML-DSA m4fstack~\cite{DBLP:conf/africacrypt/BosRS22} at each
comparable security level.
On \SI{16}{\kilo\byte} devices, all \hae{} levels support full
signing (secret key + signature + stack), whereas ML-DSA-87 exceeds
this budget.

\paragraph{Constant-time implementation.}
Our implementation avoids branches and memory accesses that depend
on long-term secret coefficients, with the exception of the
rejection-sampling loop inherent in the Fiat-Shamir-with-Aborts
paradigm; verification processes only public inputs.
Countermeasures against power analysis or fault attacks are not
addressed in this work.
Our implementation will be available upon acceptance of the paper. 

%% file: background.tex
\section{Background}
\label{sec:background}

In this section, we briefly explain the mathematical and algorithmic
background of \hae{}, covering its algebraic setting, parameter sets,
and core operations.
Implementation-oriented algorithm specifications based on the latest
\hae{} specification~\cite{HAETAE_Final_spec} are provided in
Appendix~\ref{sec:app:keygen}--\ref{sec:app:verify}.

\subsection{HAETAE}
\label{sec:background:haetae}
\hae{} is a module-lattice signature scheme operating in the polynomial ring
\[
  R_q \;=\; \mathbb{Z}_q[X] / (X^N + 1),
\]
with $q$ a prime modulus.  Vectors of $\ell$
(resp.\ $k$) polynomials are called \texttt{polyvecl} (resp.\ \texttt{polyveck}) and
occupy $\ell \cdot N \cdot 4$ (resp.\ $k \cdot N \cdot 4$) bytes when stored with
32-bit coefficients.  Polynomial multiplication in $R_q$ is performed efficiently via
the Number Theoretic Transform~(NTT), which maps a polynomial to its evaluation at the
$2N$-th roots of unity modulo~$q$ and reduces multiplication to a pointwise product in
$\mathcal{O}(N \log N)$ time.

\paragraph{Parameter sets.}
\hae{} is specified at three security levels.
Table~\ref{tab:haetae-params} lists the parameters referenced in this
paper. All three sets share the polynomial degree $N{=}256$ and modulus
$q{=}64{,}513$.
The module dimensions $(k,\ell)$ determine the sizes of the public
matrix $\mathbf{A} \in R_q^{k \times \ell}$ and the secret/response
vectors, and are the primary driver of memory usage.
The truncation parameter $d$ governs the public-key format:
$d{=}1$ (HAETAE-2/3) stores a rounded representation, while
$d{=}0$ (HAETAE-5) stores the NTT-domain image directly
(Section~\ref{sec:background:keygen}).

\begin{table}[t]
  \centering
  \caption{Selected \hae{} parameters. $|\texttt{poly}| = N \times 4 = \SI{1024}{\byte}$.}
  \label{tab:haetae-params}
  \footnotesize
  \begin{tabular}{lcccc}
    \toprule
    Parameter & Description & \hae{}-2 & \hae{}-3 & \hae{}-5 \\
    \midrule
    $N$ & polynomial degree & 256 & 256 & 256 \\
    $q$ & modulus & 64\,513 & 64\,513 & 64\,513 \\
    $(k,\ell)$ & module dimensions & $(2,4)$ & $(3,6)$ & $(4,7)$ \\
    $\tau$ & challenge weight & 58 & 80 & 128 \\
    $d$ & PK truncation & 1 & 1 & 0 \\
    $\alpha$ & $\mathbf{z}_1$ compression & 256 & 256 & 256 \\
    $\alpha_h$ & $\mathbf{h}$ compression & 512 & 512 & 256 \\
    \midrule
    $|\textit{pk}|$ & public key (bytes) & 992 & 1\,472 & 2\,080 \\
    $|\textit{sk}|$ & secret key (bytes) & 1\,408 & 2\,112 & 2\,752 \\
    $|\textit{sig}|$ & signature (bytes) & 1\,474 & 2\,349 & 2\,948 \\
    \bottomrule
  \end{tabular}
\end{table}

Like ML-DSA~\cite{FIPS204}, \hae{} is constructed in the \emph{Fiat--Shamir with Aborts}
(FSwA) framework~\cite{Lyubashevsky09}. 
In FSwA, the signer commits to an
ephemeral randomness $\mathbf{y}$, receives (or derives) a binary challenge
$c$ of low Hamming weight, and computes a masked response $\mathbf{z} = \mathbf{y}
+ (-1)^b (c \star \mathbf{s})$.  An abort (rejection) step is performed to ensure
that the distribution of $\mathbf{z}$ does not leak information about the secret
$\mathbf{s}$.  The key distinction between ML-DSA and \hae{} lies in the
distribution from which the ephemeral $\mathbf{y}$ is drawn, as detailed in
Section~\ref{sec:background:dilithium}.

\hae{} makes extensive use of $\textsf{HighBits}$ and $\textsf{LowBits}$
decompositions; we abbreviate these as $\textsf{HB}$ and $\textsf{LB}$
throughout, with superscripts ($h$, $z_1$, $\mathit{pk}$) indicating the
decomposition type.
For HAETAE-2/3 ($d{=}1$), the verification key stores only the
high-order bits $\mathbf{b}_1 = \textsf{HB}^{pk}(\mathbf{b})$ together
with a public $\mathrm{seed}_{\mathbf{A}}$ for expanding the matrix.
The low-order bits
$\mathbf{b}_0 = \textsf{LB}^{pk}(\mathbf{b})$ are folded into
the secret key.
For HAETAE-5 ($d{=}0$), no rounding is applied;
instead $\widehat{\bpo} = \textsf{NTT}(-2\bpo)$ is stored directly
in the NTT domain.

The signature and key sizes of \hae{} are smaller than those of ML-DSA at
comparable security levels, owing to the hyperball-based sampling strategy
described below.  At HAETAE-2 (targeting NIST security level~2), the signature
size is 1\,474 bytes, making it competitive with other lattice-based signatures
while remaining efficient to verify.

\paragraph{Key Generation.}
\label{sec:background:keygen}

Key generation expands a seed into the public matrix
$\mathbf{A} \in R_q^{k \times \ell}$ and secret vectors
$(\mathbf{s}_1, \mathbf{s}_2)$, then computes
$\mathbf{b} = \mathbf{A}\mathbf{s}_1 + \mathbf{s}_2 \bmod q$.
Unlike ML-DSA, \hae{} applies an explicit singular-value norm check
$\mathcal{N}(\mathbf{s}_1, \mathbf{s}_2) \le \gamma^2 N$ that rejects
and resamples until the bound is satisfied; this rejection loop and its
FFT-based workspace affect the key-generation stack peak.
The implementation differs across security levels:
for HAETAE-2/3 ($d{>}0$), $\mathbf{b}$ is rounded and the public key
stores $\mathbf{b}_1 = \textsf{HB}^{pk}(\mathbf{b})$, while
for HAETAE-5 ($d{=}0$), $\widehat{\bpo} = \textsf{NTT}(-2\bpo)$ is
stored directly in the NTT domain.
The norm rejection also differs: HAETAE-2/3 interleave it with the
matrix--vector multiplication, while HAETAE-5 performs it before
expanding $\mathbf{A}$.
See Algorithm~\ref{alg:keygen_d_gt_0} and~\ref{alg:keygen_d_eq_0}
in Appendix~\ref{sec:app:keygen} for the detailed specifications.

\paragraph{Signing.}
\label{sec:background:sign}

Unlike ML-DSA, which samples $\mathbf{y}$ uniformly from a hypercube,
\hae{} draws $\mathbf{y} = (\mathbf{y}_1, \mathbf{y}_2)$ from a
\emph{discretized hyperball}: coefficients are sampled from a discrete
Gaussian $\mathcal{D}_\sigma$, the vector is rescaled so that
$\|(\mathbf{y}_1, \mathbf{y}_2)\|_2 \leq B_0\Lambda$, and a norm
rejection test is applied.
This yields shorter signatures but requires a more expensive and
memory-intensive sampler (Section~\ref{sec:sampler}).

In the signature operation, the signer derives $\mu = H(\pk, M)$ and repeats the steps of
sampling $\mathbf{y}$, computing
$\mathbf{w} = \mathbf{A}\lfloor\mathbf{y}_1\rceil \bmod 2q$, deriving
challenge $c$, and forming $\mathbf{z} = \mathbf{y} + (-1)^b c\star\mathbf{s}$.
If $\mathbf{z}$ passes the norm bounds, the hint
$\mathbf{h} = \textsf{HB}^h(\mathbf{w}') -
\textsf{HB}^h(\mathbf{w}' - 2\lfloor\mathbf{z}_2\rceil)$ is computed
and the signature
$\sigma = (\textsf{HB}^{z_1}(\lfloor\mathbf{z}_1\rceil),\,
\textsf{LB}^{z_1}(\lfloor\mathbf{z}_1\rceil),\,\mathbf{h},\,c)$
is encoded.
See Algorithm~\ref{alg:sign} in Appendix~\ref{sec:app:sign}.
A na\"ive implementation keeps $\mathbf{y}$, $\mathbf{w}$,
$\mathbf{z}$, and encoding buffers live simultaneously, totaling
$(\ell{+}2k)\cdot|\texttt{poly}|$ of ring objects
(e.g.\ over 12\,kB for \hae{}-5).
The hyperball sampler amplifies this: the scaling factor depends on the
norm of the full Gaussian sample, so a one-pass implementation must
store all $(L{+}K) \cdot N$ coefficients before producing any output.

\paragraph{Verification.}
\label{sec:background:verify}

Verification (Algorithm~\ref{alg:verify} in Appendix~\ref{sec:app:verify})
parses the signature, reconstructs
$\tilde{\mathbf{w}} = \mathbf{A}\tilde{\mathbf{z}}_1 \bmod 2q$,
recomputes the challenge from
$\tilde{\mathbf{h}} + \textsf{HB}^h(\tilde{\mathbf{w}}')$
and the message digest, and accepts if the recomputed challenge
matches and all norm bounds hold.
There is no rejection loop or secret-dependent sampler, but
materializing the matrix product, decoded signature, and hint buffers
simultaneously can still require substantial stack space.

\subsection{Applicable ML-DSA Memory Optimization}
\label{sec:background:dilithium}

Due to the similarity of the schemes, the memory-optimization techniques of~\cite{DBLP:conf/africacrypt/BosRS22} can be applied to
the operations common between \hae{} and ML-DSA.

Both schemes are module-lattice Fiat--Shamir signatures with the same high-level shape:
seed expansion to generate a public matrix $\mathbf{A}$; computation of a public
commitment involving $\mathbf{A}$ and short secret vectors; a sign-then-reject loop
in which an ephemeral sample is masked by a low-weight challenge; and verification by
recomputing the challenge from the response and the public key.  Both also rely on NTT-based
polynomial arithmetic and $\textsf{HB}/\textsf{LB}$ decompositions to
compress intermediate values.  This structural similarity means that the
\emph{streaming} and \emph{liveness-reduction} ideas from~\cite{DBLP:conf/africacrypt/BosRS22}
carry over to \hae{} with appropriate adaptation.

The primary algorithmic difference is the distribution from which the
ephemeral randomness $\mathbf{y} = (\mathbf{y}_1, \mathbf{y}_2)$ is drawn.
ML-DSA samples $\mathbf{y}$ \emph{uniformly from a hypercube}: each coefficient
is drawn independently and uniformly from $\{-\gamma_1 + 1, \dots, \gamma_1\}$.
\hae{} instead samples $\mathbf{y}$ from a \emph{discrete Gaussian conditioned on a
hyperball}: coefficients are drawn from a discrete Gaussian, the vector is rescaled
so that its Euclidean norm lies in $[\,0, B_0\Lambda\,]$, and a Euclidean norm
rejection test is applied.  This hyperball sampling yields shorter signatures and
keys because it more efficiently fills the acceptance region, but at the price of
a more expensive and memory-intensive sampler (see Section~\ref{sec:sampler}).

In ML-DSA, the public key is $(\rho, \mathbf{t}_1)$, where $\mathbf{t}_1$
is the high-order part of $\mathbf{A}\mathbf{s}_1 + \mathbf{s}_2$.  The matrix
seed $\rho$ is stored explicitly, and the secret key additionally holds $\mathbf{t}_0$
(the low-order residue) to enable efficient signing.
In \hae{}, the public key is $(\mathrm{seed}_{\mathbf{A}}, \mathbf{b}_1)$, where
$\mathbf{b}_1$ is the high-order part of a related lattice commitment $\mathbf{b}$.
The \hae{} secret key is more compact; it stores a single short vector $\mathbf{s}$
of augmented dimension $k + \ell + 1$ and a nonce $K$---because the lattice relation
allows the signer to recover the full signing basis on-the-fly.

\begin{figure}[t]
    \centering
    \includegraphics[width=0.9\linewidth]{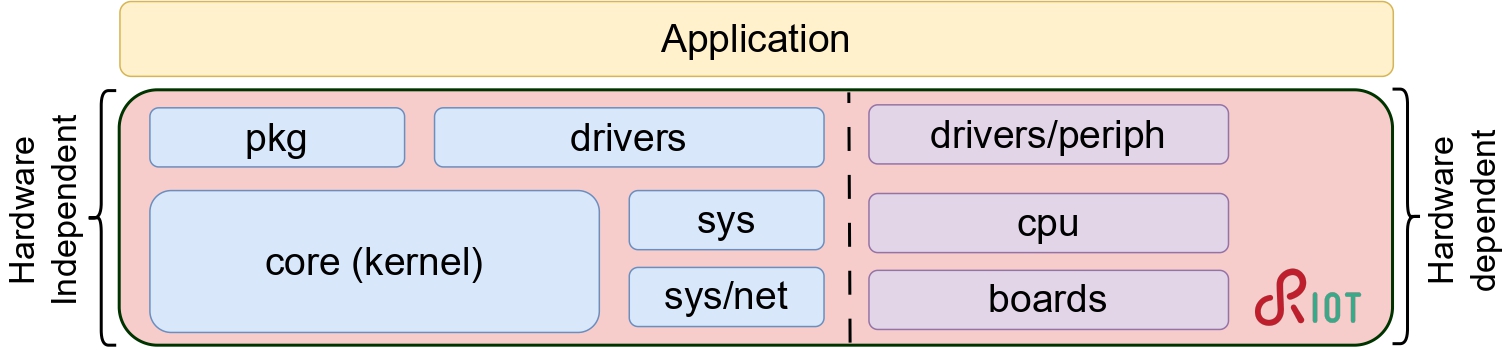}
    \vspace{-0.1cm}
    \caption{Overview of RIOT-OS modularization packages.}
    \label{fig:riot_modularization}
\end{figure}

\subsection{RIOT-OS}
RIOT~\cite{riot:home} is an open-source operating system targeting
resource-constrained IoT devices. Originally introduced
in~\cite{Riot_original}, it is written in C and designed with portability
in mind, which allows it to support a broad range of boards and
architectures. Figure~\ref{fig:riot_modularization} shows the modular
structure of RIOT. The most left part in the figure 
corresponds to hardware-independent components, whereas the right
part contains hardware-specific ones. This
design enables RIOT to reuse most of its code across platforms and to
favor portable C implementations instead of relying on architecture-specific
intrinsics or assembly code.

%% file: theoretical_eval.tex
\section{Idealized memory baselines for \hae{}{}}
\label{sec:theory}

Before describing the concrete implementation techniques in
Section~\ref{sec:techniques}, we analyze the idealized memory
footprint of each \hae{} routine by counting only the dominant ring
objects (polynomials and polynomial vectors) while ignoring constant-size
overheads such as hash states, scalar variables, and call frames.
This analysis clarifies the relationship between the number of
simultaneously live large ring objects and the achievable peak stack.

\paragraph{Dominant objects.}
All three \hae{} parameter sets use $N{=}256$; a single polynomial
with 32-bit coefficients occupies $|\texttt{poly}| = 1\,024$\,B.
The public matrix $\mathbf{A} \in R_q^{k \times \ell}$ has $k$ rows
and $\ell$ columns; a \texttt{polyveck} ($k$ polynomials) and
\texttt{polyvecl} ($\ell$ polynomials) scale linearly.
For \hae-5 $(k{=}4,\ell{=}7)$ these are 4\,096\,B and 7\,168\,B
respectively.

\paragraph{High-level algorithms.}
We summarize the \hae{} workflow in
Algorithm~\ref{alg:haetae-highlevel}; it is intentionally schematic since
our implementation keeps the mathematics unchanged while reorganizing
the order in which objects are materialized.
We use the $\textsf{HB}/\textsf{LB}$ abbreviations introduced in
Section~\ref{sec:background:haetae}.

\begin{algorithm}[t]
\caption{Schematic \hae{} (Key generation, Signing, Verification)}
\label{alg:haetae-highlevel}
\footnotesize
\begin{algorithmic}[1]
\Procedure{Key Generation}{$1^\lambda$}
  \State Sample seeds $\rho,\kappa$;
         sample $(\mathbf{s}_1,\mathbf{s}_2)$
  \State Reject if $\mathcal{N}(\mathbf{s}_1,\mathbf{s}_2) > \gamma^2 N$
         \Comment{FFT-based singular-value norm}
  \State Expand $\rho \mapsto \mathbf{A}$;
         compute $\mathbf{b} \gets \mathbf{A}\mathbf{s}_1 + \mathbf{s}_2 \bmod q$;
         output $(\pk,\sk)$
\EndProcedure
\Procedure{Signing}{$\sk, M$}
  \State Derive $\mu \gets H(\pk,M)$
  \Repeat
    \State Sample $(\mathbf{y}_1,\mathbf{y}_2)$; compute
           $\mathbf{w} \gets \mathbf{A}\lfloor\mathbf{y}_1\rceil \bmod 2q$
    \State $c \gets \textsf{SampleBinaryChallenge}_\tau(H(\textsf{HB}^h(\mathbf{w}'),
           \textsf{LSB}(\lfloor y_{1,1}\rceil), \mu))$
    \State $\mathbf{z} \gets \mathbf{y} + (-1)^b c\star\mathbf{s}$
  \Until{$\mathbf{z}$ passes rejection tests}
  \State $\mathbf{h} \gets \textsf{HB}^h(\mathbf{w}') -
         \textsf{HB}^h(\mathbf{w}' - 2\lfloor\mathbf{z}_2\rceil)$
         \Comment{post-acceptance}
  \State $\sigma \gets (\textsf{HB}^{z_1}(\lfloor\mathbf{z}_1\rceil),\,
         \textsf{LB}^{z_1}(\lfloor\mathbf{z}_1\rceil),\,\mathbf{h},\,c)$
\EndProcedure
\Procedure{Verification}{$\pk, M, \sigma$}
  \State Parse $\sigma$ into
         $(\textsf{HB}^{z_1},\textsf{LB}^{z_1},\mathbf{h},c)$;
         reconstruct $\tilde{\mathbf{z}}_1$
  \State $\tilde{\mathbf{w}} \gets \mathbf{A}\tilde{\mathbf{z}}_1 \bmod 2q$;
         recompute $c'$ from
         $(\tilde{\mathbf{h}}+\textsf{HB}^h(\tilde{\mathbf{w}}'),\mu)$
  \State Accept iff $c'{=}c$ and norm bounds hold
\EndProcedure
\end{algorithmic}
\end{algorithm}

\paragraph{Key generation.}
The reference implementation materializes $\mathbf{A}$, $\mathbf{s}_1$, and
$\mathbf{b}$ simultaneously
(${\approx}\,(k\ell{+}\ell{+}k)\cdot|\texttt{poly}|$).
Since both $\mathbf{A}$ and $\mathbf{s}_1$ are deterministically
expandable from seeds, the matrix--vector product can be streamed with
$2{\cdot}|\texttt{poly}|$ (one row accumulator plus one sampling scratch).
However, the singular-value norm check
$\mathcal{N}(\mathbf{s}_1,\mathbf{s}_2) \le \gamma^2 N$ requires an
FFT-based computation with a 2\,048\,B workspace, giving a total
baseline of $2{\cdot}|\texttt{poly}| + 2\,048 = 4\,096$\,B.

\paragraph{Signature generation.}
The reference implementation keeps $\mathbf{y}_1$ ($\ell$ polys),
$\mathbf{y}_2$ and $\mathbf{w}$ ($2k$ polys), and the
hint~$\mathbf{h}$ live simultaneously, totaling
${\approx}\,(\ell{+}2k)\cdot|\texttt{poly}|$
(e.g.\ $12\,288$\,B for \hae{}-5).
With standard streaming, $\mathbf{y}$ and $\mathbf{w}$ can be
regenerated row by row from seeds, but the full hint buffer
$\mathbf{h} \in R^k$, the high-bits array
$\textsf{HB}^{z_1}(\mathbf{z}_1) \in \mathbb{Z}^{\ell \times N}$,
and the challenge polynomial~$c$ must
still be materialized before the rANS encoder can process them.
In particular, $c$ remains live throughout the rejection test and
the packing phase, coexisting with the hint and streaming buffers.
This gives a streaming baseline of
$(k{+}3)\cdot|\texttt{poly}| + \ell{\cdot}N$ bytes
(e.g.\ $7\,168 + 1\,792 = 8\,960$\,B for \hae{}-5).

\paragraph{Verification.}
A column-streamed approach accumulates the matrix product
$\tilde{\mathbf{w}} = \widehat{\Apo}\circ\textsf{NTT}(\tilde{\mathbf{z}}_1)$
into a full \texttt{polyveck} buffer while sweeping
$\tilde{\mathbf{z}}_1$ once.
The challenge polynomial~$c$ must also remain live for the final
comparison, giving a baseline of
$(k{+}2)\cdot|\texttt{poly}|$ (e.g.\ $6\,144$\,B for \hae{}-5).

\paragraph{Summary and outlook.}
Table~\ref{tab:theory-vs-measured} summarizes the baseline streaming
footprints for \hae{}-5; these baselines count dominant ring objects
and staging buffers while excluding seeds, hash states, and other
implementation-specific overheads.
The same analysis applies to \hae{}-2/3 with adjusted $(k,\ell)$
values, though differences in the public-key domain ($d{>}0$ requires
an additional rounding step and $\mathbf{a}$-vector) and the placement
of the rejection loop affect the concrete figures.
The prior work of~\cite{cryptoeprint:2026/442} reports \hae{}-5
stack usage close to these baselines: \SI{5212}{\byte} (Key generation),
\SI{8092}{\byte} (Signing), and \SI{6220}{\byte} (Verification).
In the next section we introduce additional techniques, including
pass decomposition with \texttt{noinline} boundaries, reverse-order
streaming entropy coding, and row-streamed verification, that push the
stack footprint below these idealized baselines.
Section~\ref{sec:impl:results-pqm4} confirms that the resulting
implementation achieves lower stack usage while retaining competitive
cycle counts relative to~\cite{cryptoeprint:2026/442}.

\begin{table}[t]
  \centering
  \caption{Streaming baselines vs.\ measured stack on \texttt{pqm4}
  (\hae{}-5, bytes).
  Baselines count dominant ring objects and staging buffers;
  constant-size overheads (hash states, scalars, call frames) are
  excluded.
  Full performance results are in Table~\ref{tab:pqm4-eval}.}
  \label{tab:theory-vs-measured}
  \resizebox{0.8\linewidth}{!}{
  \begin{tabular}{lrrr}
    \toprule
    & Idealized baseline & \cite{cryptoeprint:2026/442} & Ours \\
    \midrule
    Key generation & $2 \times 1\,024 + 2\,048 = 4\,096$ & 5\,212 & \textbf{4\,816} \\
    Signing        & $(k{+}3) \times 1\,024 + \ell{\cdot}N = 8\,960$ & 8\,092 & \textbf{6\,136} \\
    Verification   & $(k{+}2) \times 1\,024 = 6\,144$ & 6\,220 & \textbf{4\,840} \\
    \bottomrule
  \end{tabular}
  }
\end{table}

%% file: techniques.tex
\section{Low-footprint memory techniques}
\label{sec:techniques}

In this section, we describe the main techniques we use to reduce the memory footprint 
of \hae{} key generation, signing, and verification. 
Our approach is guided by the principle of trading increased computation time for reduced peak stack footprint,
which is essential for predictable execution on memory-constrained microcontrollers. 
We also leverage the structure of \hae{}'s algorithms, 
such as the use of seed-derived objects and sparse challenges, 
to enable streaming and in-place computation strategies that minimize stack usage.

\subsection{Low-stack key generation via streaming}
\label{sec:impl:lowstack-keygen}

When key generation for digital signatures on constrained devices is required to be performed at runtime,
it can no longer be circumvented with a one-time provisioning step.
Practical applications include generating ephemeral keys for firmware integrity verification
and implementing key rotation to mitigate compromise.

The baseline key generation is specified in
Algorithms~\ref{alg:keygen_d_gt_0} and~\ref{alg:keygen_d_eq_0}
(Appendix~\ref{sec:app:keygen}).
We describe our stack optimizations in two parts.
First we present the streaming techniques applied to HAETAE-2/3 ($d > 0$), which involve rounding and an additional
$\apo$-vector not present in the $d{=}0$ case; this streaming of key generation
for $d > 0$ is a new contribution not covered in previous work.
Figure~\ref{fig:haetae-keygen-stream} illustrates the resulting memory layout.
We then describe our HAETAE-5 ($d = 0$) implementation, which
follows the same high-level approach as~\cite{cryptoeprint:2026/442}, but further compresses the stack through caller-level union analysis.

\paragraph{Streaming matrix--vector multiplication.}
Key generation computes
$\mathbf{b} = \mathbf{a} + \mathbf{A}\mathbf{s}_1 + \mathbf{s}_2 \pmod{q}$,
where $\mathbf{A}\in R_q^{k\times \ell}$ is derived from a seed and
$\mathbf{s}_1\in R_q^{\ell}$, $\mathbf{s}_2\in R_q^{k}$ are short secret vectors.
The reference implementation materializes $\mathbf{A}$ and $\widehat{\mathbf{s}}_1$
as full polynomial vectors, causing large stack pressure.
We reorganize the computation so that $\mathbf{s}_1$, $\mathbf{s}_2$, and $\mathbf{A}$
are all generated and consumed on-the-fly
(steps~\stepn{2}--\stepn{12} in Figure~\ref{fig:haetae-keygen-stream}):
for each column~$j$ we sample $\mathbf{s}_{1,j}$ into a single scratch polynomial
(\stepn{2}),
apply the NTT in place~(\stepn{4}), and for each row~$i$ generate $\mathbf{A}_{i,j}$ from the seed
and immediately accumulate the pointwise product into $\mathbf{b}_i$~(\stepn{5}).
The secret vectors are written to the secret key as they are produced~(\stepn{3}, \stepn{12}), and a
norm accumulator (\texttt{int32\_t sum[N]}) tracks the singular-value check
incrementally~(\stepn{13}); $\mathbf{s}_1$ norms are accumulated only at $i{=}0$ to
avoid redundant computation in the inner loop.
The entire key generation thus requires only two single-polynomial scratch
slots (\texttt{poly b, s}) and one norm accumulator on the stack.

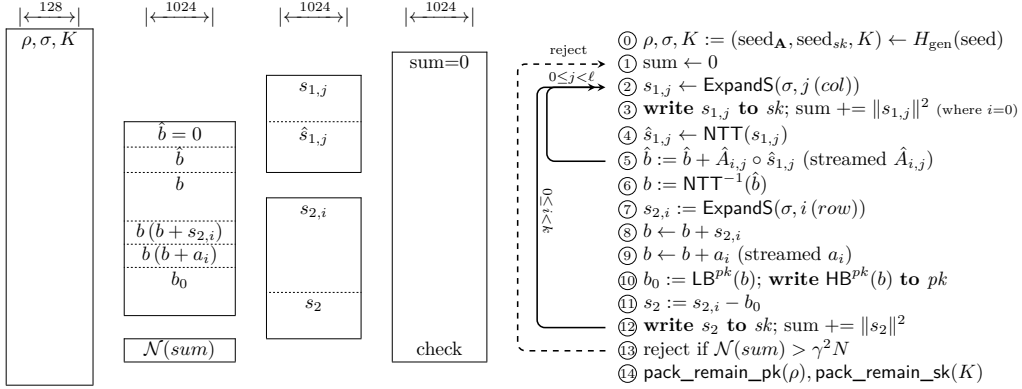
\begin{figure}[t!]
    \centering
    \resizebox{1.0\textwidth}{!}{
        \begin{tabular}{c p{0.05in} c p{0.05in} c p{0.05in} c p{0.05in} p{0.25in} p{0.05in} l}
            $\mid \xleftrightarrow{\;\; 128 \;\;}\mid$ & &
            $\mid \xleftrightarrow{\;\; 1024 \;\;}\mid$ & &
            $\mid \xleftrightarrow{\;\; 1024 \;\;}\mid$ & &
            $\mid \xleftrightarrow{\;\; 1024 \;\;}\mid$ & & & & \\
            \cline{1-1}
            \multicolumn{1}{|c|}{$\rho, \sigma, K$} & & & & & & & & & &
            \stepn{0} $\rho, \sigma, K := (\text{seed}_\Apo, \text{seed}_\sk, K) \gets H_{\text{gen}}(\text{seed})$ \\
            \cline{7-7}
            \multicolumn{1}{|c|}{} & & & & & &
            \multicolumn{1}{|c|}{$\text{sum}{=}0$} & &
            \tikzmark{rej2}{} & \tikzmark{rej3}{} &
            \stepn{1} $\text{sum} \gets 0$ \\
            \cline{5-5}
            \multicolumn{1}{|c|}{} & & & &
            \multicolumn{1}{|c|}{$s_{1,j}$} & &
            \multicolumn{1}{|c|}{} & &
            \tikzmark{lr2}{} & \tikzmark{lr3}{} &
            \stepn{2} $s_{1,j} \gets \textsf{ExpandS}(\sigma, j\,(col))$ \\
            \multicolumn{1}{|c|}{} & & & &
            \multicolumn{1}{|c|}{} & &
            \multicolumn{1}{|c|}{} & & & &
            \stepn{3} \textbf{write $s_{1,j}$ to $\sk$}; $\text{sum} \mathrel{+}= \|s_{1,j}\|^2$ {\scriptsize(where $i{=}0$)} \\
            \cline{3-3} \cdashline{5-5}[1pt/1pt]
            \multicolumn{1}{|c|}{} & & \multicolumn{1}{|c|}{$\hat{b}=0$} & &
            \multicolumn{1}{|c|}{$\hat{s}_{1,j}$} & &
            \multicolumn{1}{|c|}{} & & & &
            \stepn{4} $\hat{s}_{1,j} \gets \ntt(s_{1,j})$ \\
            \cdashline{3-3}[1pt/1pt]
            \multicolumn{1}{|c|}{} & &
            \multicolumn{1}{|c|}{$\hat{b}$} & &\multicolumn{1}{|c|}{} & &
            \multicolumn{1}{|c|}{} & &
            \tikzmark{cb1}{} & \tikzmark{cb0}{} &
            \stepn{5} $\hat{b} := \hat{b} + \hat{A}_{i,j} \circ \hat{s}_{1,j}$ (streamed $\hat{A}_{i,j}$)\\
            \cdashline{3-3}[1pt/1pt] \cline{5-5}
            \multicolumn{1}{|c|}{} & &
            \multicolumn{1}{|c|}{$b$} & & & &
            \multicolumn{1}{|c|}{} & & & &
            \stepn{6} $b := \invntt(\hat{b})$ \\
            \cline{5-5}
            \multicolumn{1}{|c|}{} & &
            \multicolumn{1}{|c|}{} & &
            \multicolumn{1}{|c|}{$s_{2,i}$} & &
            \multicolumn{1}{|c|}{} & & & &
            \stepn{7} $s_{2,i} := \textsf{ExpandS}(\sigma, i\,(row))$ \\
            \cdashline{3-3}[1pt/1pt]
            \multicolumn{1}{|c|}{} & &
            \multicolumn{1}{|c|}{$b\,(b + s_{2,i}$)} & &
            \multicolumn{1}{|c|}{} & &
            \multicolumn{1}{|c|}{} & & & &
            \stepn{8} $b \gets b + s_{2,i}$ \\
            \cdashline{3-3}[1pt/1pt]
            \multicolumn{1}{|c|}{} & &
            \multicolumn{1}{|c|}{$b\,(b + a_i$)} & &
            \multicolumn{1}{|c|}{} & &
            \multicolumn{1}{|c|}{} & & & &
            \stepn{9} $b \gets b + a_{i}$ (streamed $a_i$) \\
            \cdashline{3-3}[1pt/1pt]
            \multicolumn{1}{|c|}{} & &
            \multicolumn{1}{|c|}{$b_0$} & &
            \multicolumn{1}{|c|}{} & &
            \multicolumn{1}{|c|}{} & & & &
            \stepn{10} $b_0 := \textsf{LB}^{pk}(b)$; \textbf{write} $\textsf{HB}^{pk}(b)$  \textbf{to $\pk$} \\
            \cdashline{5-5}[1pt/1pt]
            \multicolumn{1}{|c|}{} & &
            \multicolumn{1}{|c|}{} & &
            \multicolumn{1}{|c|}{$s_2$} & &
            \multicolumn{1}{|c|}{} & & & &
            \stepn{11} $s_2 := s_{2,i} - b_0$ \\
            \cline{3-3}
            \multicolumn{1}{|c|}{} & & & &
            \multicolumn{1}{|c|}{} & &
            \multicolumn{1}{|c|}{} & &
            \tikzmark{rb1}{} & \tikzmark{rb0}{} &
            \stepn{12} \textbf{write $s_2$ to $\sk$}; $\text{sum} \mathrel{+}= \|s_2\|^2$ \\
            \cline{3-3} \cline{5-5}
            \multicolumn{1}{|c|}{} & &
			\multicolumn{1}{|c|}{$\mathcal{N}(sum)$} & & & &
            \multicolumn{1}{|c|}{check} & &
            \tikzmark{rej1}{} & \tikzmark{rej0}{} &
            \stepn{13} reject if $\mathcal{N}(sum) > \gamma^2 N$ \\
            \cline{3-3} \cline{7-7}
            \multicolumn{1}{|c|}{} & & & & & & & & & &
            \stepn{14} $\textsf{pack\_remain\_pk}(\rho), \textsf{pack\_remain\_sk}(K)$ \\
            \cline{1-1}
        \end{tabular}
        \begin{tikzpicture}[remember picture, overlay, >=stealth, shift={(0,0)}, thick]            
            \draw[rounded corners,->] ([yshift=\tikzoffset, xshift=5pt] cb0.east) -- ([yshift=\tikzoffset, xshift=5pt] cb1.east) -- ([yshift=\tikzoffset, xshift=5pt] lr2.east) -- ([yshift=\tikzoffset, xshift=5pt] lr3.east) node [midway, above, sloped, xshift=-4pt, yshift=-2pt] () {\ ${\scriptstyle 0\le j < \ell}$};
                        
            \draw[rounded corners,->] ([yshift=\tikzoffset, xshift=5pt] rb0.east) -- ([yshift=\tikzoffset] rb1.east) -- ([yshift=\tikzoffset] lr2.east) node [midway, right, rotate=-90, anchor=south, yshift=-2pt] () {\ ${\scriptstyle 0\le i < k}$}-- ([yshift=\tikzoffset] lr3.east);
            
            \draw[rounded corners, dashed, ->] ([yshift=\tikzoffset, xshift=5pt] rej0.east) -- ([yshift=\tikzoffset, xshift=-10pt] rej1.east) -- ([yshift=\tikzoffset, xshift=-10pt] rej2.east) -- ([yshift=\tikzoffset, xshift=5pt] rej3.east) node [midway, above, sloped, xshift=2pt] () {\ ${\scriptstyle \text{reject}}$};
        \end{tikzpicture}}
    \caption{
        Memory allocation of the proposed memory-optimized \hae{}-2,3 key generation with streaming.
        This approach streams the matrix $\hat{A}$ on-the-fly and reuses memory slots, keeping only two single polynomials (\texttt{poly b, s}) and one norm accumulator (\texttt{int32\_t sum[N]}) on the stack.
        The dashed arrow indicates the rejection loop unique to \hae{}-2,3.
    }
    \label{fig:haetae-keygen-stream}
\end{figure}

\paragraph{Additional stack reductions.}
SHAKE-based sampling routines (\texttt{poly\_uniform}, and 
\texttt{poly\_uniform\_eta}) are rewritten to squeeze one block at a time, replacing multi-block stack buffers
with a single-block buffer without altering the output distribution.
Together with the streaming matrix multiplication and the fused frozen-$A$
variant, these techniques remove full-vector temporaries, stack-local expansion
buffers, and multi-block squeeze buffers, yielding a substantially smaller and
more predictable stack profile.
Note that streaming and re-computation alter data-access patterns; a thorough
side-channel analysis is left for future work.

\paragraph{HAETAE-5 ($d = 0$) key generation.}
For HAETAE-5, the key generation follows the same high-level streaming
approach as prior work~\cite{cryptoeprint:2026/442}: on-the-fly matrix generation,
row-by-row accumulation of $\widehat{\bpo} = \textsf{NTT}(-2\bpo)$, and
direct storage in the NTT domain.
The two-pass structure is illustrated
in Figure~\ref{fig:haetae5-keygen-stream} (Appendix~\ref{sec:app:keygen}):
Pass~1 (norm check, steps~\stepn{1}--\stepn{6} in Figure~\ref{fig:haetae5-keygen-stream})
and Pass~2 (matrix--vector product, steps~\stepn{7}--\stepn{17}).
Our implementation additionally introduces two caller-level \texttt{union}s that
tightly pack all large temporaries into a fixed-size caller frame,
eliminating the need for deep callee stack frames.
The first union shares the 2\,048-byte FFT scratch used for the singular-value
norm in Pass~1~(\stepn{3}, \stepn{5}) with the 1\,024-byte sampling polynomial (\texttt{poly~s})
used in both passes~(\stepn{2}, \stepn{7}); the FFT is invoked directly in the caller rather
than delegated to a callee, so the scratch never appears as a separate
stack frame.
The second union shares the norm accumulator (\texttt{int32\_t sum[N]},
\SI{1024}{\byte}) in Pass~1~(\stepn{1}) with the row accumulator (\texttt{poly~b}, \SI{1024}{\byte})
in Pass~2~(\stepn{10}), exploiting the strict lifetime disjointness across the two passes.
Together, the two unions compress the dominant temporaries to
$2\,048 + 1\,024 = 3\,072$\,bytes in the caller frame, with no additional
callee overhead for the norm computation.
By analyzing the stack frames of each callee and identifying
lifetime-disjoint buffers across the two passes, this union-based layout
reduces the key-generation stack by approximately \SI{400}{\byte} compared
to~\cite{cryptoeprint:2026/442} (\SI{4784}{\byte} vs.\ \SI{5212}{\byte}
on the same \texttt{pqm4}framework).

\begin{figure}[t]
    \centering
    \resizebox{1.0\textwidth}{!}{
        \begin{tabular}{c p{0.05in} c p{0.05in} c p{0.05in} c p{0.05in} p{0.25in} p{0.05in} l}
            $\mid \xleftrightarrow{\;\; 160 \;\;}\mid$ & &
            $\mid \xleftrightarrow{\;\; 1024 \;\;}\mid$ & &
            $\mid \xleftrightarrow{\;\; 1024 \;\;}\mid$ & &
            $\mid \xleftrightarrow{\;\; 1024 \;\;}\mid$ & & & & \\[2pt]
            \cline{1-1}
            \multicolumn{1}{|c|}{seeds, $\mu$} & & & & & & & & & &
            \stepn{0} derive $\mu$, $\mathrm{seed}_{ybb}$; $\kappa \gets 0$ \\[3pt]
            \multicolumn{1}{|c|}{} & & & & & & & &
            \tikzmark{sRejDst2}{} & \tikzmark{sRejDst3}{} &
            {\scriptsize\textbf{--- Pass~A: row-streaming $\mathbf{w}$ and challenge ---}} \\[2pt]
            \cline{3-3} \cline{5-5}
            \multicolumn{1}{|c|}{} & &
            \multicolumn{1}{|c|}{$\hat{w}_i{=}0$} & &
            \multicolumn{1}{|c|}{$\mathrm{ytmp}$} & & & &
            \tikzmark{sj2}{} & \tikzmark{sj3}{} &
            \stepn{1} $\mathrm{ytmp} \gets \textsf{ExpandYbb}(\mathrm{seed}_{ybb}, \kappa)_j$ \\[2pt]
            \cdashline{5-5}[1pt/1pt]
            \multicolumn{1}{|c|}{} & &
            \multicolumn{1}{|c|}{} & &
            \multicolumn{1}{|c|}{$\widehat{\mathrm{ytmp}}$} & & & &
            \tikzmark{sj1}{} & \tikzmark{sj0}{} &
            \stepn{2} $\hat{w}_i \mathrel{+}= \widehat{\Apo}_{i,j} \circ \textsf{NTT}(\mathrm{ytmp})$ \\[1pt]
            \multicolumn{1}{|c|}{} & &
            \multicolumn{1}{|c|}{} & &
            \multicolumn{1}{|c|}{} & & & & & &
            {\scriptsize $\triangleright$ streamed $\widehat{\Apo}$} \\[2pt]
            \cdashline{3-3}[1pt/1pt] \cline{5-5}
            \multicolumn{1}{|c|}{} & &
            \multicolumn{1}{|c|}{$\mathbf{w}_i$} & & & & & & & &
            \stepn{3} $\mathbf{w}_i \gets \textsf{NTT}^{-1}(\hat{w}_i) + 2 \cdot \lfloor y_{2,i} \rceil \bmod q$ \\[2pt]
            \cdashline{3-3}[1pt/1pt]
            \multicolumn{1}{|c|}{} & &
            \multicolumn{1}{|c|}{$\mathbf{w}'_i$} & & & & & & & &
            \stepn{4} $\mathbf{w}'_i \gets \textsf{fromCRT}(\mathbf{w}_i, \lfloor y_{1,1} \rceil)$ \\[2pt]
            \multicolumn{1}{|c|}{} & &
            \multicolumn{1}{|c|}{} & & & & & & & &
            \stepn{5} $\mathbf{w}'_{1,i} \gets \textsf{HB}^h(\mathbf{w}'_i)$ \\[2pt]
            \multicolumn{1}{|c|}{} & &
            \multicolumn{1}{|c|}{} & & & & & &
            \tikzmark{si1}{} & \tikzmark{si0}{} &
            \stepn{6} absorb $\mathbf{w}'_{1,i}$ into $\mathcal{H}$ \\[1pt]
            \multicolumn{1}{|c|}{} & &
            \multicolumn{1}{|c|}{} & & & & & & & &
            {\scriptsize $\triangleright$ incremental transcript hashing} \\[2pt]
            \cline{3-3}
            \multicolumn{1}{|c|}{} & & & & & & & & & &
            \stepn{7} $\rho \gets H(\mathbf{w}'_1, \textsf{LSB}(\lfloor y_{1,1} \rceil), \mu)$ \\[2pt]
            \cline{3-3}
            \multicolumn{1}{|c|}{} & &
            \multicolumn{1}{|c|}{$c$} & & & & & & & &
            \stepn{8} $c \gets \textsf{SampleBinaryChallenge}_\tau(\rho)$ \\[3pt]
            \multicolumn{1}{|c|}{} & &
            \multicolumn{1}{|c|}{} & & & & & & & &
            {\scriptsize\textbf{--- Pass~B: streaming rejection check ---}} \\[2pt]
            \cline{5-5} \cline{7-7}
            \multicolumn{1}{|c|}{} & &
            \multicolumn{1}{|c|}{} & &
            \multicolumn{1}{|c|}{$\mathbf{y}_i$} & &
            \multicolumn{1}{|c|}{$\widehat{\spo}_i$} & &
            \tikzmark{sBi2}{} & \tikzmark{sBi3}{} &
            \stepn{9} $\mathbf{z}_i \gets \mathbf{y}_i + (-1)^b \textsf{NTT}^{-1}(\widehat{c} \circ \widehat{\spo}_i)$ \\[1pt]
            \multicolumn{1}{|c|}{} & &
            \multicolumn{1}{|c|}{} & &
            \multicolumn{1}{|c|}{} & &
            \multicolumn{1}{|c|}{} & &
            \tikzmark{sEE1}{} & \tikzmark{sEE0}{} &
            {\scriptsize $\triangleright$ fused; early reject if $\|z_{1,0}\|^2 > B_1^2$} \\[2pt]
            \cdashline{5-5}[1pt/1pt] \cline{7-7}
            \multicolumn{1}{|c|}{} & &
            \multicolumn{1}{|c|}{} & &
            \multicolumn{1}{|c|}{$\mathbf{z}_i$} & & & &
            \tikzmark{sBi1}{} & \tikzmark{sBi0}{} &
            \stepn{10} $\|\mathbf{z}\|^2 \mathrel{+}= \|\mathbf{z}_i\|^2$; discard $\mathbf{z}_i$ \\[2pt]
            \cline{5-5}
            \multicolumn{1}{|c|}{} & &
            \multicolumn{1}{|c|}{} & & & & & &
            \tikzmark{sRej1}{} & \tikzmark{sRej0}{} &
            \stepn{11} (early) reject if $\|(\mathbf{z}_1,\mathbf{z}_2)\|_2 \geq B'$; repeat Pass~A,\,B \\[4pt]
            \multicolumn{1}{|c|}{} & &
            \multicolumn{1}{|c|}{} & & & & & & & &
            {\scriptsize\textbf{--- Pass~C: post-acceptance (\textsf{noinline}) ---}} \\[2pt]
            \cline{5-5} \cline{7-7}
            \multicolumn{1}{|c|}{} & &
            \multicolumn{1}{|c|}{} & &
            \multicolumn{1}{|c|}{$\mathbf{w}_i$} & &
            \multicolumn{1}{|c|}{$\mathbf{z}_{2,i}$} & &
            \tikzmark{sC1i2}{} & \tikzmark{sC1i3}{} &
            \stepn{12} $C_1$: $\mathbf{h}_i \gets \mathbf{w}'_{1,i} - \textsf{HB}^h(\mathbf{w}'_i - 2\lfloor\mathbf{z}_{2,i}\rceil)$ \\[1pt]
            \multicolumn{1}{|c|}{} & &
            \multicolumn{1}{|c|}{} & &
            \multicolumn{1}{|c|}{} & &
            \multicolumn{1}{|c|}{} & &
            \tikzmark{sC1i1}{} & \tikzmark{sC1i0}{} &
            {\scriptsize $\triangleright$ $h_{i,j}$ streamed to rANS encoder (no $\mathbf{h}$ buffer); $i{=}K{-}1{\downarrow}0$} \\[2pt]
            \cline{5-5} \cline{7-7}
            \cline{5-5} \cline{7-7}
            \multicolumn{1}{|c|}{} & &
            \multicolumn{1}{|c|}{} & &
            \multicolumn{1}{|c|}{$\lfloor\mathbf{z}_{1,i}\rceil$} & &
            \multicolumn{1}{|c|}{$\widehat{\spo}_{1,i}$} & &
            \tikzmark{sC2i2}{} & \tikzmark{sC2i3}{} &
            \stepn{13} $C_2$: $\sigma \gets \textsf{PackSig}(\textsf{HB}^{z_1}(\lfloor\mathbf{z}_{1,i}\rceil), \textsf{LB}^{z_1}(\lfloor\mathbf{z}_{1,i}\rceil))$ \\[1pt]
            \multicolumn{1}{|c|}{} & &
            \multicolumn{1}{|c|}{} & &
            \multicolumn{1}{|c|}{} & &
            \multicolumn{1}{|c|}{} & &
            \tikzmark{sC2i1}{} & \tikzmark{sC2i0}{} &
            {\scriptsize $\triangleright$ $\textsf{HB}(z_{1,i,j})$ streamed to rANS encoder (no $\textsf{HB}$ buffer); $i{=}L{-}1{\downarrow}0$} \\[2pt]
            \cline{3-3} \cline{5-5} \cline{7-7}
            \cline{1-1}
            & & & & & & & & & &
            \stepn{14} $\textsf{HB}^{z_1}$ copy-out at end of $C_2$; $\mathbf{h}$ appended at finalization \\
        \end{tabular}
        \begin{tikzpicture}[remember picture, overlay, >=stealth, shift={(0,0)}, thick]
            \draw[rounded corners,->] ([yshift=\tikzoffset, xshift=5pt] sj0.east) -- ([yshift=\tikzoffset, xshift=5pt] sj1.east) -- ([yshift=\tikzoffset, xshift=5pt] sj2.east) -- ([yshift=\tikzoffset, xshift=5pt] sj3.east) node [midway, above, sloped, xshift=-4pt, yshift=-2pt] () {\ ${\scriptstyle 0\le j < L}$};
            \draw[rounded corners,->] ([yshift=\tikzoffset, xshift=5pt] si0.east) -- ([yshift=\tikzoffset] si1.east) -- ([yshift=\tikzoffset] sj2.east) node [midway, right, rotate=-90, anchor=south, yshift=-2pt] () {\ ${\scriptstyle 0\le i < K}$} -- ([yshift=\tikzoffset] sj3.east);
            \draw[rounded corners, dashed, ->] ([yshift=\tikzoffset, xshift=5pt] sRej0.east) -- ([yshift=\tikzoffset, xshift=-15pt] sRej1.east) -- ([yshift=\tikzoffset, xshift=-15pt] sRejDst2.east) -- ([yshift=\tikzoffset, xshift=5pt] sRejDst3.east) node [midway, above, sloped, xshift=2pt] () {\ ${\scriptstyle \text{reject}}$};
            \draw[rounded corners,->] ([yshift=\tikzoffset, xshift=5pt] sBi0.east) -- ([yshift=\tikzoffset, xshift=5pt] sBi1.east) -- ([yshift=\tikzoffset, xshift=5pt] sBi2.east) -- ([yshift=\tikzoffset, xshift=5pt] sBi3.east) node [midway, above, sloped, xshift=-4pt, yshift=-2pt] () {\ ${\scriptstyle 0\le i < L{+}K}$};
            \draw[rounded corners, dashed, ->] ([yshift=\tikzoffset, xshift=5pt] sEE0.east) -- ([yshift=\tikzoffset, xshift=-15pt] sEE1.east) -- ([yshift=\tikzoffset, xshift=-15pt] sRejDst2.east) -- ([yshift=\tikzoffset, xshift=5pt] sRejDst3.east);
            \draw[rounded corners,->] ([yshift=\tikzoffset, xshift=5pt] sC1i0.east) -- ([yshift=\tikzoffset, xshift=5pt] sC1i1.east) -- ([yshift=\tikzoffset, xshift=5pt] sC1i2.east) -- ([yshift=\tikzoffset, xshift=5pt] sC1i3.east) node [midway, above, sloped, xshift=-4pt, yshift=-2pt] () {\ ${\scriptstyle K{-}1 \ge i \ge 0}$};
            \draw[rounded corners,->] ([yshift=\tikzoffset, xshift=5pt] sC2i0.east) -- ([yshift=\tikzoffset, xshift=5pt] sC2i1.east) -- ([yshift=\tikzoffset, xshift=5pt] sC2i2.east) -- ([yshift=\tikzoffset, xshift=5pt] sC2i3.east) node [midway, above, sloped, xshift=-4pt, yshift=-2pt] () {\ ${\scriptstyle L{-}1 \ge i \ge 0}$};
        \end{tikzpicture}}
    \caption{
        Memory allocation of the proposed pass-decomposed \hae{} signing.
        The driver holds persistent state (seeds, $\mu$; 160\,B) and
        a single 1\,024\,B polynomial slot that serves as the row
        accumulator $\hat{w}_i$ for the matrix--vector product in Pass~A
        and holds the challenge~$c$ from Pass~B onward.
        Two additional poly-sized scratch slots (1\,024\,B each) are
        reused across passes.
        The dashed arrow marks the rejection loop (Pass~A\,+\,B only);
        Pass~C runs once after acceptance.
        $C_1$ and $C_2$ are invoked via \texttt{noinline},
        so their frames never coexist.
        Reverse-order loops ($i{\downarrow}, j{\downarrow}$) in $C_1$/$C_2$
        stream coefficients directly to the rANS encoder, eliminating full
        $\mathbf{h}$ and $\textsf{HB}(\mathbf{z}_1)$ staging buffers.
    }
    \label{fig:sign-stream}
\end{figure}
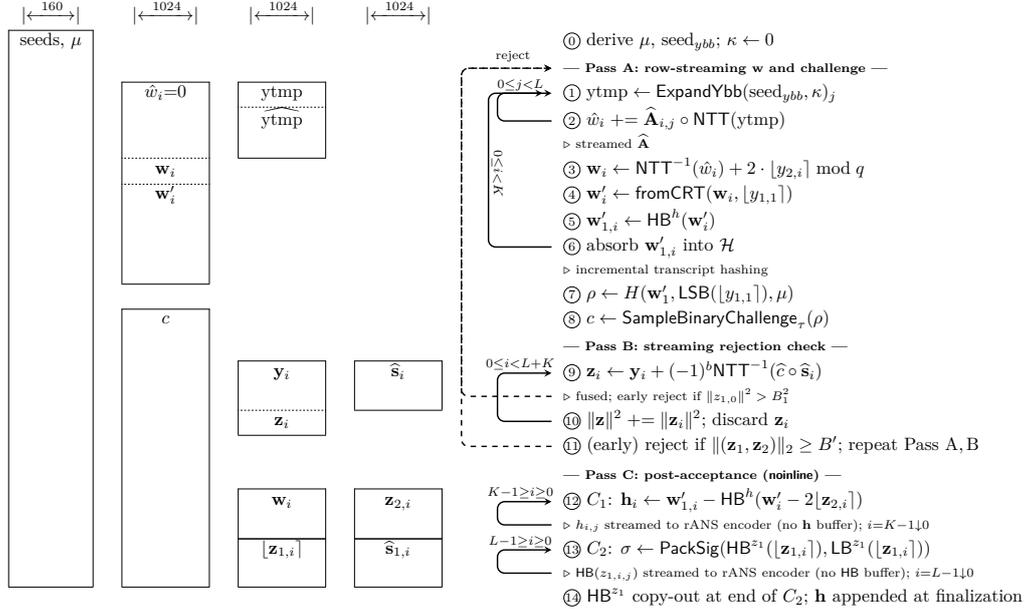

\subsection{Low-stack signing via pass decomposition and reverse-order streaming}
\label{sec:impl:lowstack-sign}

Recall \hae{} signing (Algorithm~\ref{alg:sign} in Appendix~\ref{sec:app:sign})
mainly consists of a rejection-sampling loop: each iteration draws ephemeral vectors
$(\mathbf{y}_1,\mathbf{y}_2)$ from a hyperball distribution, computes the
matrix--vector product
$\mathbf{w} \gets \textsf{NTT}^{-1}(\widehat{\Apo}\circ
\textsf{NTT}(\lfloor\mathbf{y}_1\rceil)) + 2\lfloor\mathbf{y}_2\rceil
\bmod q$, derives a challenge~$c$, forms the response
$\mathbf{z} = \mathbf{y} + (-1)^b c\star\mathbf{s}$, and repeats until the
norm bounds are satisfied.
The reference implementation~\cite{NISTPQC-ADD-R1:HAETAE23} keeps
$\mathbf{w}$, $\mathbf{z}$, and the hint~$\mathbf{h}$ live simultaneously,
leading to high peak stack usage.

Our design follows the~\cite{DBLP:conf/africacrypt/BosRS22} principle of
trading recomputation for reduced peak memory, and adopts the seed-based
on-the-fly matrix streaming used by~\cite{cryptoeprint:2026/442}: each
entry of $\widehat{\Apo}$ is regenerated from the public seed rather than
stored in RAM.
Beyond these shared foundations, we introduce several techniques not
described by~\cite{cryptoeprint:2026/442}:
(i)~a \emph{Rejection-aware pass decomposition} that confines encoding and
hint computation to a post-acceptance path;
(ii)~\emph{Component-level early rejection} that short-circuits
the response computation in Pass~B when a partial norm already exceeds the
bound;
and (iii)~\emph{Reverse-order streaming entropy coding} that eliminates full
hint and high-bit staging buffers by aligning computation order with the
rANS encoder's backward symbol consumption.
Additionally, our signing path achieves zero mutable BSS
(\texttt{.bss}\,$=$\,0\,B) through a streaming Gaussian backend
(Section~\ref{sec:sampler}).
Moreover, while~\cite{cryptoeprint:2026/442} targets \hae{}-5 only, our
design supports all three security levels including \hae{}-2/3 ($d{>}0$).
Figure~\ref{fig:sign-stream} illustrates the resulting memory layout.

\paragraph{Rejection-aware pass decomposition.}
We decompose the signing loop into three passes whose large buffers have
strictly non-overlapping lifetimes:
\begin{itemize}
\item \textbf{Pass~A} (steps~\stepn{1}--\stepn{8} in Figure~\ref{fig:sign-stream})
      samples $(\mathbf{y}_1,\mathbf{y}_2)$ via the
      two-pass hyperball sampler (Section~\ref{sec:sampler}), computes
      $\mathbf{w}$ row by row using on-the-fly matrix streaming, derives
      $\mathbf{w}'_1 = \textsf{HB}^h(\textsf{fromCRT}(\mathbf{w},
      \lfloor y_{1,1}\rceil))$, and obtains the challenge
      $c \gets \textsf{SampleBinaryChallenge}_\tau\!\bigl(H(\mathbf{w}'_1,
      \textsf{LSB}(\lfloor y_{1,1}\rceil), \mu)\bigr)$.
      The single-polynomial row accumulator for $\mathbf{w}$ reuses the
      output location of~$c$ (overwritten only at the end of the pass),
      eliminating a dedicated scratch buffer.
      For $d{>}0$ (\hae{}-2/3), the first column of $\widehat{\Apo}$ is
      derived on-the-fly from the packed public key via in-place CRT
      reconstruction, requiring no additional polynomial buffer.
\item \textbf{Pass~B} (\stepn{9}--\stepn{11}) regenerates $(\mathbf{y}_1,\mathbf{y}_2)$ from the
      stored seed and accepted nonce, computes
      $\mathbf{z} = \mathbf{y} + (-1)^b c\star\mathbf{s}$ one component at a
      time via fused accumulation (described below), and evaluates the
      rejection tests.
      No matrix product is needed, since the $\ell_2$-norm and
      $\ell_\infty$ tests involve only $\mathbf{z}$, $\mathbf{y}$,
      and~$b'$.
      Each component is discarded after its norm contribution is accumulated.
\item \textbf{Pass~C} (\stepn{12}--\stepn{14}) runs only once, after acceptance.
      It recomputes $\mathbf{z}$ and $\mathbf{w}$ to derive the hint
      $\mathbf{h} \gets \mathbf{w}'_1 -
       \textsf{HB}^h(\mathbf{w}' - 2\lfloor\mathbf{z}_2\rceil)
       \bmod^{+} \frac{2(q{-}1)}{\alpha_h}$
      and packs the signature.
      The variable-length payloads $\mathbf{h}$ and
      $\textsf{HB}^{z_1}(\mathbf{z}_1)$ are encoded via reverse-order streaming
      rANS (described below).
      Pass~C is further split into two sub-phases ($C_1$, hint encoding, and $C_2$, $\mathbf{z}_1$ packing) with non-overlapping buffer lifetimes.
\end{itemize}
Because the rejection loop comprises only Pass~A and Pass~B, all encoding
and hint computation are deferred to the post-acceptance path.
The peak signing stack is therefore
\begin{equation}\label{eq:sign-peak}
  S_{\mathrm{sign}} \;=\;
  S_{\mathrm{driver}} + \max\!\bigl(S_A,\; S_B,\; S_{C_1},\; S_{C_2}\bigr),
\end{equation}
where $S_{\mathrm{driver}}$ is the persistent driver state (seeds, nonce
counter, hash prefix) and the four terms are the transient allocations of
each pass.
Our decomposition guarantees that the peak follows
Equation~\eqref{eq:sign-peak} through explicit \texttt{noinline}
boundaries between passes, confining encoding and hint computation to the
single post-acceptance execution.

In contrast, the idealized streaming baseline from
Section~\ref{sec:theory} requires the full hint and high-bits arrays
to be materialized before encoding:
\[
  S_{\mathrm{sign}}^{\mathrm{conv}} \;=\;
  (k{+}2)\cdot|\texttt{poly}| + \ell \cdot N.
\]
Our pass-aware design replaces this with
Equation~\eqref{eq:sign-peak}, where for \hae{}-5:
\[
  S_{C_1} \approx 3\cdot|\texttt{poly}| + B_h,
  \qquad
  S_{C_2} \approx 2\cdot|\texttt{poly}| + B_{hb},
\]
with $B_h$ and $B_{hb}$ denoting the compact streaming encoder
buffers (bounded by the base entropy of $\mathbf{h}$ and
$\textsf{HB}^{z_1}(\mathbf{z}_1)$ respectively), which are
substantially smaller than the full staging arrays they replace.
This means we are even below the idealized streaming baseline.
Since $S_{C_1}$ and $S_{C_2}$ dominate the peak while Pass~A requires
only a single polynomial accumulator, the pass-aware structure leaves
unused stack budget in Pass~A.
We exploit this by batching two matrix rows per iteration of the
inner product, halving the number of ephemeral-vector regenerations
from $k\ell$ to $\lceil k/2 \rceil \cdot \ell$
at the cost of one additional polynomial accumulator
($+1\,024$\,B in~$S_A$), without increasing the overall
peak~$S_{\mathrm{sign}}$.

\paragraph{Component-level early rejection.}
The reference implementation computes the full response vector
$\mathbf{z} = (\mathbf{z}_1, \mathbf{z}_2)$ before evaluating any
rejection condition.
We introduce component-level early exits that reduce the average cost
of rejected iterations in Pass~B
(steps~\stepn{9}--\stepn{10} in Figure~\ref{fig:sign-stream}).
The first component $z_{1,0} = y_{1,0} + (-1)^b \cdot c$ involves
no secret-key material (the challenge~$c$ is derived from public
data), so a data-dependent branch on its norm is constant-time safe.
Pass~B therefore computes $z_{1,0}$ before any NTT-based
multiplication~(\stepn{9}) and rejects immediately if
$\|z_{1,0}\|^2 > B_1^2$ (dashed arrow between
\stepn{9} and~\stepn{10}), skipping the remaining
$\ell {+} k {-} 1$ fused multiply-accumulate operations.
During the subsequent per-component norm accumulation~(\stepn{10})
$\|\mathbf{z}\|^2 = \sum_i \|z_i\|^2$, we check the running partial
sum after each polynomial and skip the remaining components once the
bound is exceeded.
Additionally, the hyperball scaling factors, which are deterministic
functions of the seed and nonce counter, are computed once in the
driver frame and passed by pointer to Passes~B, $C_1$, and~$C_2$,
eliminating three redundant recomputations per accepted iteration.

\paragraph{Reverse-order streaming entropy coding.}
The signature includes two variable-length rANS-encoded payloads: the hint
$\mathbf{h} \in \mathbb{Z}^{k \times N}$ and the high-bit decomposition
$\textsf{HB}(\mathbf{z}_1) \in \mathbb{Z}^{\ell \times N}$.
Conventional implementations materialize these as full arrays before
encoding; for \hae{}-5, a \texttt{polyveck} hint buffer
($k {\times}\, 1\,024$\,B~$= 4\,096$\,B) and an \texttt{int8\_t} array
($\ell {\times}\, N = 1\,792$\,B) for $\textsf{HB}(\mathbf{z}_1)$.

rANS is a \emph{backward} entropy coder: for a flat array
$X[0], \ldots, X[m{-}1]$, the encoder consumes symbols in the order
$X[m{-}1], X[m{-}2], \ldots, X[0]$.
The conventional approach first materializes the entire array, then
iterates backward:
\[
  \textsf{RansEncPut}(X[m{-}1]),\;
  \textsf{RansEncPut}(X[m{-}2]),\;
  \ldots,\;
  \textsf{RansEncPut}(X[0]).
\]
We show that for a row-major $R \times C$ array ($m = R \cdot C$),
the identical reverse sequence can be produced without materializing
the full array, by traversing rows from $i = R{-}1$ down to~$0$ and
coefficients from $j = C{-}1$ down to~$0$ within each row:
\begin{align*}
  &\text{for } i = R{-}1 \text{ downto } 0: \\
  &\quad \text{for } j = C{-}1 \text{ downto } 0: \\
  &\qquad \textsf{RansEncPut}\bigl(X[i \cdot C + j]\bigr).
\end{align*}
This allows each coefficient to be fed to the encoder immediately upon
computation, without materializing the full $R \times C$ array.
In Pass~$C_1$~(\stepn{12} in Figure~\ref{fig:sign-stream}), each hint row $\mathbf{h}_i$ ($R{=}k$, $C{=}N$) is
recomputed via row-streaming and immediately encoded;
in Pass~$C_2$~(\stepn{13}), $\textsf{HB}^{z_1}(\mathbf{z}_{1,i})$
($R{=}\ell$, $C{=}N$) is handled analogously.
Both full staging buffers are eliminated entirely.

The encoded output is accumulated in a compact stack-local encoder
context and copied to the signature buffer only after encoding
completes, following an \emph{internal encoding with copy-out} boundary
model.
To verify correctness, we performed differential testing for each
parameter set (\hae{}-2/3/5, 10\,000 signatures each):
for each randomly generated signature, we extracted $\mathbf{h}$ and
$\textsf{HB}^{z_1}(\mathbf{z}_1)$ from the produced signature,
re-encoded them with both the conventional full-array encoder
(\texttt{encode\_h} / \texttt{encode\_hb\_z1}) and the streaming
reverse-order encoder, and compared the resulting byte streams.
In all cases the two outputs were byte-identical.

\paragraph{Incremental transcript hashing.}
Both signing (Pass~A, steps~\stepn{5}--\stepn{6} in Figure~\ref{fig:sign-stream})
and verification (\stepn{8}--\stepn{9} in Figure~\ref{fig:verify-stream})
derive the challenge polynomial by
hashing a transcript that includes the packed high bits of the matrix--vector
product.
The reference implementation allocates a contiguous buffer of
\texttt{POLYVECK\_HIGHBITS\_PACKEDBYTES\,+\,POLYC\_PACKEDBYTES} bytes
(608--1\,184\,B depending on the security level), packs all $k$ rows into
it, and then feeds it to SHAKE256 in one shot.
Because SHAKE256 is a sponge and satisfies
$\mathrm{absorb}(A\|B) = \mathrm{absorb}(A);\,\mathrm{absorb}(B)$,
we replace this monolithic buffer with \emph{row-by-row incremental absorb}:
each row's high bits are packed into a small single-row buffer
(\texttt{POLY\_HIGHBITS\_PACKEDBYTES}), immediately absorbed into the
SHAKE256 state, and discarded before the next row is processed.
After the row loop, the remaining fields ($\mathbf{w}'$ and the message
digest $\mu$) are absorbed and the state is finalized.
This eliminates the full transcript buffer entirely, yielding measurable
stack savings (up to \SI{656}{\byte} for HAETAE-5).

\paragraph{Sparse challenge multiplication with fused accumulation.}
In Pass~B~(\stepn{9} in Figure~\ref{fig:sign-stream}),
for \hae{}-2/3, where the challenge
$c(X) = \sum_{t=1}^{\tau} X^{i_t}$ has Hamming weight~$\tau{=}60$, we
replace NTT-based multiplication by a purely coefficient-domain \emph{signed
shift-and-add} rule in the negacyclic ring
$R_q = \mathbb{Z}_q[X]/(X^N{+}1)$:
\[
  (c \star s)[j] \;=\; \sum_{t=1}^{\tau} \varepsilon_t \cdot s[(j-i_t)\bmod N],
  \qquad
  \varepsilon_t = \begin{cases} +1 & j \geq i_t, \\ -1 & j < i_t, \end{cases}
\]
where $\varepsilon_t$ arises from the negacyclic relation $X^N = -1$.
For \hae{}-5 ($\tau{=}128$, dense binary challenge), we retain
NTT-based multiplication as the dense weight makes a coefficient-domain
approach less efficient.
In all cases we fuse the accumulation directly into the
response: each signed shift (or partial product) is added in place to the
corresponding component of~$\mathbf{y}$, producing~$\mathbf{z}$ without
allocating a separate product polynomial (saving \SI{1024}{\byte} per
multiplication).

\paragraph{Algorithm-level BSS elimination.}
Our complete signing path uses no mutable BSS (\texttt{.bss}\,$=$\,0\,B).
The hyperball sampler's static temporary buffers
(\texttt{tmp\_samples[N+1]} and \texttt{tmp\_signs[(N+7)/8]},
totaling \SI{2088}{\byte} in the reference implementation)
are replaced by a fully streaming Gaussian backend that
generates samples on-the-fly from the SHAKE sponge state, as detailed in
Section~\ref{sec:sampler}.
Our implementation also provides an alternative build configuration
(\texttt{BRS\_BSS\_WORKSPACE}) that places large signing vectors
($\mathbf{y}$, $\mathbf{z}$, $\mathbf{Ay}$) in a static BSS
workspace for reduced recomputation, trading BSS for lower-latency
signing.
In the full-streaming configuration (Table~\ref{tab:pqm4-eval}),
the pass decomposition eliminates this workspace entirely, achieving
\texttt{.data}\,$=$\,0 and \texttt{.bss}\,$=$\,0.

\begin{figure}[t]
    \centering
    \resizebox{1.0\textwidth}{!}{
        \begin{tabular}{c p{0.03in} c p{0.03in} c p{0.03in} c p{0.03in} p{0.2in} p{0.03in} l}
            $\mid \xleftrightarrow{\;\; 32+32 \;\;}\mid$ & &
            $\mid \xleftrightarrow{\;\; 512 \;\;}\mid$ & &
            $\mid \xleftrightarrow{\;\; 1024 \;\;}\mid$ & &
            $\mid \xleftrightarrow{\;\; 1024 \;\;}\mid$ & & & & \\
            \cline{1-1}
            \multicolumn{1}{|c|}{$\mathbf{w}', \rho$} & & & & & & & & & &
            \stepn{0} parse $\sigma$; extract $\rho$ from $\pk$ \\[2pt]
            \multicolumn{1}{|c|}{} & & & & & & & & & &
            {\scriptsize\textbf{--- Pre-pass: $\mathbf{w}'$ and $\|\mathbf{z}_1\|^2$ ---}} \\[2pt]
            \cline{3-3}
            \multicolumn{1}{|c|}{} & &
            \multicolumn{1}{|c|}{$\mathrm{hb\_col}$} & & & & & &
            \tikzmark{vpp2}{} & \tikzmark{vpp3}{} &
            \stepn{1} $\mathrm{hb\_col} \gets \textsf{rANS\_decode}(\sigma, \ell)$ \\[2pt]
            \multicolumn{1}{|c|}{} & &
            \multicolumn{1}{|c|}{} & & & & & &
            \tikzmark{vpp1}{} & \tikzmark{vpp0}{} &
            \stepn{2} $z_{1,\ell} \gets 2^8 \cdot \mathrm{hb\_col} + \mathrm{lb}[\ell]$ \\[1pt]
            \multicolumn{1}{|c|}{} & &
            \multicolumn{1}{|c|}{} & & & & & & & &
            {\scriptsize $\triangleright$ $\|\mathbf{z}_1\|^2 \mathrel{+}= \|z_{1,\ell}\|^2$; $\mathbf{w}' \gets \textsf{LSB}(z_{1,0}-c)$} \\[2pt]
            \cline{3-3}
            \multicolumn{1}{|c|}{} & & & & & & & & & &
            {\scriptsize\textbf{--- Row loop: $r = 0 \ldots K{-}1$ ---}} \\[2pt]
            \cline{3-3} \cline{5-5}
            \multicolumn{1}{|c|}{} & &
            \multicolumn{1}{|c|}{$\mathrm{hb\_col}$} & &
            \multicolumn{1}{|c|}{$z_{1,\ell}$} & & & &
            \tikzmark{vrc2}{} & \tikzmark{vrc3}{} &
            \stepn{3} $\mathrm{hb\_col} \gets \textsf{rANS\_decode}(\sigma, \ell)$ \\[1pt]
            \multicolumn{1}{|c|}{} & &
            \multicolumn{1}{|c|}{} & &
            \multicolumn{1}{|c|}{} & & & & & &
            {\scriptsize $\triangleright$ re-decode per row ($K\times$ recomputation)} \\[2pt]
            \cdashline{5-5}[1pt/1pt]
            \multicolumn{1}{|c|}{} & &
            \multicolumn{1}{|c|}{} & &
            \multicolumn{1}{|c|}{$\widehat{z}_{1,\ell}$} & & & & & &
            \stepn{4} $\widehat{z}_{1,\ell} \gets \textsf{NTT}(z_{1,\ell})$ \\[2pt]
            \cdashline{7-7}[1pt/1pt]
            \multicolumn{1}{|c|}{} & &
            \multicolumn{1}{|c|}{} & &
            \multicolumn{1}{|c|}{} & &
            \multicolumn{1}{|c|}{$t_r$} & &
            \tikzmark{vrc1}{} & \tikzmark{vrc0}{} &
            \stepn{5} $t_r \mathrel{+}= \widehat{\Apo}_{r,\ell} \circ \widehat{z}_{1,\ell}$ \\[1pt]
            \multicolumn{1}{|c|}{} & &
            \multicolumn{1}{|c|}{} & &
            \multicolumn{1}{|c|}{} & &
            \multicolumn{1}{|c|}{} & & & &
            {\scriptsize $\triangleright$ streamed $\widehat{\Apo}$} \\[2pt]
            \cline{3-3} \cdashline{5-5}[1pt/1pt] \cdashline{7-7}[1pt/1pt]
            \multicolumn{1}{|c|}{} & & & &
            \multicolumn{1}{|c|}{} & &
            \multicolumn{1}{|c|}{$t_r$} & & & &
            \stepn{6} $t_r \gets \textsf{NTT}^{-1}(t_r)$ \\[1pt]
            \multicolumn{1}{|c|}{} & & & &
            \multicolumn{1}{|c|}{} & &
            \multicolumn{1}{|c|}{} & & & &
            {\scriptsize $\triangleright$ $\textsf{fromCRT}$ lift to mod $2q$} \\[2pt]
            \cline{3-3}
            \multicolumn{1}{|c|}{} & &
            \multicolumn{1}{|c|}{$\mathrm{h\_row}$} & &
            \multicolumn{1}{|c|}{} & &
            \multicolumn{1}{|c|}{} & & & &
            \stepn{7} $\mathrm{h\_row} \gets \textsf{rANS\_decode}(\sigma, r)$ \\[1pt]
            \multicolumn{1}{|c|}{} & &
            \multicolumn{1}{|c|}{} & &
            \multicolumn{1}{|c|}{} & &
            \multicolumn{1}{|c|}{} & & & &
            {\scriptsize $\triangleright$ reuses hb\_col memory (union)} \\[2pt]
            \cline{5-5}
            \multicolumn{1}{|c|}{} & &
            \multicolumn{1}{|c|}{} & &
            \multicolumn{1}{|c|}{$\mathbf{w}'_{1,r}$} & &
            \multicolumn{1}{|c|}{} & & & &
            \stepn{8} $\mathbf{w}'_{1,r} \gets \textsf{HB}^h(\mathbf{w}'_r) + \mathrm{h\_row}$ \\[2pt]
            \multicolumn{1}{|c|}{} & &
            \multicolumn{1}{|c|}{} & &
            \multicolumn{1}{|c|}{} & &
            \multicolumn{1}{|c|}{} & & & &
            \stepn{9} absorb $\textsf{pack}(\mathbf{w}'_{1,r})$ into $\mathcal{H}$ \\[1pt]
            \multicolumn{1}{|c|}{} & &
            \multicolumn{1}{|c|}{} & &
            \multicolumn{1}{|c|}{} & &
            \multicolumn{1}{|c|}{} & & & &
            {\scriptsize $\triangleright$ incremental transcript hashing} \\[2pt]
            \cline{5-5} \cdashline{7-7}[1pt/1pt]
            \multicolumn{1}{|c|}{} & &
            \multicolumn{1}{|c|}{} & & & &
            \multicolumn{1}{|c|}{$z_{2,r}$} & & & &
            \stepn{10} $z_{2,r} \gets (\alpha \mathbf{w}'_{1,r} - t_r + w'_r)/2$ \\[1pt]
            \multicolumn{1}{|c|}{} & &
            \multicolumn{1}{|c|}{} & & & &
            \multicolumn{1}{|c|}{} & & & &
            {\scriptsize $\triangleright$ $\|\mathbf{z}_2\|^2 \mathrel{+}= \|z_{2,r}\|^2$} \\[2pt]
            \cline{3-3} \cline{7-7}
            \multicolumn{1}{|c|}{} & & & & & & & &
            \tikzmark{vrr1}{} & \tikzmark{vrr0}{} &
            \stepn{11} abort if $\|(\mathbf{z}_1,\mathbf{z}_2)\|_2 > B'$ \\[3pt]
            \multicolumn{1}{|c|}{} & & & & & & & & & &
            \stepn{12} $c' \gets \textsf{SampleBinaryChallenge}_\tau(H(\mathbf{w}'_1, \mathbf{w}', \mu))$ \\[2pt]
            \cline{1-1}
            & & & & & & & & & &
            \stepn{13} return $(c' \stackrel{?}{=} c)$ \\
        \end{tabular}
        \begin{tikzpicture}[remember picture, overlay, >=stealth, shift={(0,0)}, thick]
            \draw[rounded corners,->] ([yshift=\tikzoffset, xshift=5pt] vpp0.east) -- ([yshift=\tikzoffset, xshift=5pt] vpp1.east) -- ([yshift=\tikzoffset, xshift=5pt] vpp2.east) -- ([yshift=\tikzoffset, xshift=5pt] vpp3.east) node [midway, above, sloped, xshift=-4pt, yshift=-2pt] () {\ ${\scriptstyle 0\le \ell < L}$};
            \draw[rounded corners,->] ([yshift=\tikzoffset, xshift=5pt] vrc0.east) -- ([yshift=\tikzoffset, xshift=5pt] vrc1.east) -- ([yshift=\tikzoffset, xshift=5pt] vrc2.east) -- ([yshift=\tikzoffset, xshift=5pt] vrc3.east) node [midway, above, sloped, xshift=-4pt, yshift=-2pt] () {\ ${\scriptstyle 0\le \ell < L}$};
            \draw[rounded corners,->] ([yshift=\tikzoffset, xshift=5pt] vrr0.east) -- ([yshift=\tikzoffset] vrr1.east) -- ([yshift=\tikzoffset] vrc2.east) node [midway, right, rotate=-90, anchor=south, yshift=-2pt] () {\ ${\scriptstyle 0\le r < K}$} -- ([yshift=\tikzoffset] vrc3.east);
        \end{tikzpicture}}
    \caption{
        Memory allocation of the row-streamed \hae{} verification.
        Four memory slots are used: persistent small buffers
        ($\mathbf{w}'$ packed bits + seed $\rho$, 64\,B, retained through
        the final challenge recomputation),
        a \texttt{union\{int8\_t hb\_col[N]; uint16\_t h\_row[N]\}} (512\,B)
        that alternates between decoded high-bits columns in the inner
        $\ell$ loop and the decoded hint row afterwards,
        and two polynomials \texttt{poly z} and \texttt{poly trow}
        (1\,024\,B each).
        The row loop re-decodes $\textsf{HB}^{z_1}(\mathbf{z}_1)$ via
        streaming rANS for each row $r$, trading a factor-$k$
        re-computation for the elimination of a full \texttt{polyveck}
        buffer.
        The incremental SHAKE256 state $\mathcal{H}$ absorbs each row's
        packed high bits immediately.
    }
    \label{fig:verify-stream}
\end{figure}

\subsection{Low-stack verification via view-style decoding and row streaming}
\label{sec:impl:lowstack-verify}

HAETAE verification (Algorithm~\ref{alg:verify} in
Appendix~\ref{sec:app:verify}) reconstructs
\(\tilde{\mathbf w}'_1 = \tilde{\mathbf h} + \textsf{HB}^h(\tilde{\mathbf w}')\) from
\(\tilde{\mathbf w} = \widehat{\Apo}\circ \textsf{NTT}(\tilde{\mathbf z}_1)\) (up to the CRT lift mod~\(2q\)), derives the
auxiliary component \(\tilde{\mathbf z}_2\), and re-hashes the packed transcript
\((\tilde{\mathbf w}'_1, w')\) to recompute the challenge.
Rather than following the column-streamed approach
of~\cite{cryptoeprint:2026/442}, our verification is built on the~\cite{DBLP:conf/africacrypt/BosRS22} method of
trading re-computation for transient memory, combining
(i)~\emph{view-style decoding} with union overlays,
(ii)~\emph{row-streamed} matrix--vector multiplication, and
(iii)~\emph{incremental transcript hashing} that eliminates the
full-transcript buffer (the same technique introduced for signing in
Section~\ref{sec:impl:lowstack-sign}).
This design supports all three security levels, including
HAETAE-2/3 ($D{>}0$), which is not addressed
by~\cite{cryptoeprint:2026/442}, and achieves a uniform verification
stack of \SI{3800}{\byte} across all security levels, a 39\,\% reduction compared
to the \SI{6220}{\byte} reported by~\cite{cryptoeprint:2026/442} for
HAETAE-5 alone.
Figure~\ref{fig:verify-stream} illustrates the resulting memory layout.

\paragraph{View-style decoding.}
We avoid materializing the decoded signature vectors
(steps~\stepn{0}--\stepn{2} in Figure~\ref{fig:verify-stream}).
Concretely, we keep $\textsf{LB}^{z_1}(\lfloor\mathbf{z}_1\rceil)$ as
a pointer into the signature buffer, decode only
$\textsf{HB}^{z_1}(\lfloor\mathbf{z}_1\rceil)$ into an \texttt{int8}
array~(\stepn{1}), and decode $\mathbf{h}$ into a compact \texttt{uint16} array~(\stepn{7}).
These two decoded arrays share a \texttt{union} overlay
(\texttt{int8\_t hb\_col[N]} / \texttt{uint16\_t h\_row[N]}), exploiting
lifetime disjointness: the high-bits columns are consumed during
the inner column loop, after which the same memory is reused for the
decoded hint row.
We also represent $w' \in \mathcal{R}_2$ as a packed bitstring of
$N/8$ bytes. A pre-pass~(\stepn{1}--\stepn{2}) over coefficients recomposes
$\tilde{z}_{1,\ell} = \textsf{HB}^{z_1} \cdot 256 + \textsf{LB}^{z_1}$
on-the-fly, accumulates $\lVert\tilde{\mathbf{z}}_1\rVert_2^2$, and derives
$w' = \textsf{LSB}(\tilde{z}_{1,1} - c)$.
Together, the working set for decoded data is a single 512-byte union plus
a 32-byte packed bitstring, independent of the module dimensions $k$ and $\ell$.
Signatures that fail format checks (length, zero-padding, or rANS
decoder errors) are rejected before any transcript recomputation.

\paragraph{Row-streamed multiplication.}
To minimize peak stack, we compute $\tilde{\mathbf{w}}$ one row at a time
rather than column-by-column
(steps~\stepn{3}--\stepn{11} in Figure~\ref{fig:verify-stream}).
For each output row $r$, we recompute an NTT-domain accumulator
$\tilde{w}_r \gets \sum_{\ell=0}^{L-1} \widehat{\Apo}[r,\ell]\circ \textsf{NTT}(\tilde{z}_{1,\ell})$~(\stepn{3}--\stepn{5}) by
(re)composing and transforming one column $\tilde{z}_{1,\ell}$ at a time and streaming
$\widehat{\Apo}[r,\ell]$ from the public seed $\rho$.
After an inverse NTT and CRT lift via $\textsf{fromCRT}$~(\stepn{6}),
we decode the hint row $\tilde{\mathbf{h}}_r$~(\stepn{7}) and form
$\tilde{w}'_{1,r} = \tilde{\mathbf{h}}_r + \textsf{HB}^h(\tilde{w}'_r)$~(\stepn{8}),
then immediately absorb the packed high bits into the
incremental SHAKE256 transcript state~(\stepn{9}).
Within the same loop we compute $\tilde{z}_{2,r}$~(\stepn{10}), accumulate
$\lVert\tilde{z}_{2,r}\rVert_2^2$, and abort early as soon as the bound test fails~(\stepn{11}).
This reduces the working set to a constant number of polynomials plus small
decoded arrays, eliminating concurrent \texttt{polyveck} temporaries.
The tradeoff is a factor-$k$ increase in re-computation:
each row re-decodes the $\ell$ columns of $\textsf{HB}^{z_1}(\lfloor\mathbf{z}_1\rceil)$
via streaming rANS and re-applies $\ell$ forward NTTs.

In contrast, \cite{cryptoeprint:2026/442}~uses a \emph{column-streamed}
approach that accumulates $\tilde{\mathbf{w}}$ into a full \texttt{polyveck} buffer
($k \times 1\,024$\,B) while sweeping the columns of $\tilde{\mathbf{z}}_1$ once.
Our row-streamed design trades this $k$-fold re-decode cost for a
$(k{-}1) \times 1\,024$\,B reduction in peak live memory.
Combined with the incremental transcript hashing described in
Section~\ref{sec:impl:lowstack-sign}, which absorbs each row's high bits
directly into the SHAKE256 state, the full
\texttt{POLYVECK\_HIGHBITS\_PACKEDBYTES} transcript buffer
(up to 1\,184\,B) is eliminated entirely, an optimization not applied
by either the reference implementation or~\cite{cryptoeprint:2026/442}.

For $D{>}0$ (HAETAE-2/3), the first column of $\widehat{\Apo}$ is derived
on-the-fly from the public seed and the packed public key, adding one
polynomial expansion per row but no additional persistent storage.

\begin{algorithm}[t]
  \caption{Streamed hyperball sampler}
  \label{alg:two-pass-hyperball}  
  \small
  \begin{algorithmic}[1]
    \Require Gaussian dimension $N$, integers $L, K$;
      seed $\mathsf{seed} \in \{0,1\}^{\mathsf{CRHBYTES}}$;
      nonce $\mathsf{nonce} \in \{0,1\}^{16}$;
      hyperball bound $B_0$ and scaling constant $\Lambda$ (e.g.\ $\Lambda = L+K$)
    \Ensure $(\mathbf y_1, \mathbf y_2) \in \mathbb Z^{L \times N} \times \mathbb Z^{K \times N}$
      with $\lVert (\mathbf y_1,\mathbf y_2)\rVert_2 \leq B_0 \Lambda$

    \Repeat
      \State \Comment{Pass 1: compute the (unnormalized) squared norm $S$}
      \State $\InitStream(\mathsf{st}, \mathsf{seed}, \mathsf{nonce})$
      \State $S \gets 0$ \Comment{$S$ is a scalar accumulator (e.g.\ 64-bit integer)}
      \For{$i \gets 0$ to $L+K-1$}
        \For{$j \gets 0$ to $N-1$}
          \State $x_{i,j} \gets \SampleGauss(\mathsf{st})$
          \State $S \gets S + x_{i,j}^2$
        \EndFor
      \EndFor
      \State \Comment{Compute scaling factor $\alpha \approx \dfrac{B_0 \Lambda}{\sqrt{S}}$}
      \State $\alpha \gets \InvSqrt(S, B_0, \Lambda)$

      \State \Comment{Pass 2: regenerate samples and scale immediately}
      \State $\InitStream(\mathsf{st}, \mathsf{seed}, \mathsf{nonce})$
      \For{$i \gets 0$ to $L-1$}
        \For{$j \gets 0$ to $N-1$}
          \State $x_{i,j} \gets \SampleGauss(\mathsf{st})$
          \State $y_{1,i,j} \gets \round(\alpha \cdot x_{i,j})$
        \EndFor
      \EndFor
      \For{$i \gets 0$ to $K-1$}
        \For{$j \gets 0$ to $N-1$}
          \State $x_{L+i,j} \gets \SampleGauss(\mathsf{st})$
          \State $y_{2,i,j} \gets \round(\alpha \cdot x_{L+i,j})$
        \EndFor
      \EndFor
    \Until{$\lVert (\mathbf y_1,\mathbf y_2)\rVert_2^2 \leq B_0^2 \Lambda^2$}
  \end{algorithmic}
\end{algorithm}

In the second pass, the sampler restarts the pseudorandom stream from
the same $(\mathsf{seed}, \mathsf{nonce})$ pair, so that
$\SampleGauss(\mathsf{st})$ reproduces the identical sequence
$(x_{i,j})$. This time, each sample is immediately rescaled and rounded,
\[
  y_{1,i,j} = \round(\alpha x_{i,j})\quad (0 \leq i < \ell),
  \qquad
  y_{2,i,j} = \round(\alpha x_{\ell+i,j})\quad (0 \leq i < k),
\]
and written to the output polynomials $\mathbf y_1$ and $\mathbf y_2$.
The algorithm then evaluates the squared norm
$\bigl\|(\mathbf y_1,\mathbf y_2)\bigr\|_2^2$ and rejects if it exceeds
$B_0^2 \Lambda^2$, repeating the entire two-pass procedure until acceptance.

\subsection{A memory friendly sampler}
\label{sec:sampler}
The hyperball sampler is invoked in Pass~A of signing
(\stepn{1} in Figure~\ref{fig:sign-stream}) to draw the ephemeral
vectors $(\mathbf{y}_1, \mathbf{y}_2)$ used in each rejection-sampling
iteration.
We require a sampler that outputs a random vector
\[
  (\mathbf y_1, \mathbf y_2) \in \mathbb Z^{\ell \times N} \times \mathbb Z^{k \times N},
\]
whose distribution is (approximately) Gaussian, conditioned on the norm
constraint
\[
  \bigl\|(\mathbf y_1,\mathbf y_2)\bigr\|_2 \leq B_0 \Lambda,
\]
for some bound $B_0 > 0$ and scaling factor $\Lambda$ (in our case $\Lambda = \ell+k$).
A naïve implementation samples all $(\ell+k)N$ Gaussian coefficients, stores them in
memory, computes their squared norm, rescales the entire vector, and finally applies
a rejection test. This requires $\Theta((\ell+k)N)$ words of transient storage.
The two-pass approach that avoids this large buffer was first described
by~\cite{lee2024generalized} for BLISS and is also adopted
by~\cite{cryptoeprint:2026/442} for HAETAE-5.
We apply the same high-level structure but further eliminate all
static (BSS) scratch buffers through a fully streaming Gaussian backend.

Algorithm~\ref{alg:two-pass-hyperball} implements what we call the
\emph{two-pass} sampler with \emph{streaming} Gaussian sampling and it 
only uses $O(1)$ additional memory. The sampler is parameterized by a seed
$\mathsf{seed} \in \{0,1\}^{\mathsf{CRHBYTES}}$ and a nonce
$\mathsf{nonce}$, which are used to initialize a pseudorandom stream
$\mathsf{st}$ (e.g. based on SHAKE or \textsf{stream256}).

In the first pass, the algorithm draws $(\ell+k)N$ independent
discrete Gaussian samples $(x_{i,j})_{0 \leq i < \ell+k,\ 0 \leq j < N}$ from
$\SampleGauss(\mathsf{st})$, and maintains only the scalar accumulator
\[
  S \;=\; \sum_{i=0}^{\ell+k-1} \sum_{j=0}^{N-1} x_{i,j}^2.
\]
No sample is stored beyond its contribution to $S$, so the memory footprint of
this pass is independent of $N$, $\ell$, and $k$. After the loop, the algorithm
computes a scaling factor
\[
  \alpha \approx \frac{B_0 \Lambda}{\sqrt{S}},
\]
using a fixed-point approximation of the reciprocal square root.
The following proposition proves that
the two-pass streaming implementation produces an identical output
distribution to the one-pass reference procedure using the same
fixed-point \InvSqrt{} routine.

\begin{proposition}
	Conditioned on acceptance, the joint distribution of
	$(\mathbf{y}_1, \mathbf{y}_2)$ produced by
	Algorithm~\ref{alg:two-pass-hyperball} is identical to that of the
	one-pass reference procedure that (i) draws all $(L+K)N$ Gaussian
	samples into a single array, (ii) computes the scaling factor
	$\alpha = \InvSqrt(\sum x_{i,j}^2, B_0, \Lambda)$, (iii) maps each
	$x_{i,j} \mapsto \round(\alpha x_{i,j})$, and (iv) accepts if and only if
	$\lVert(\mathbf{y}_1,\mathbf{y}_2)\rVert_2^2 \le B_0^2\Lambda^2$.
\end{proposition}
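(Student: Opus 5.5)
The plan is to show that, for a fixed seed and nonce, both procedures compute the same deterministic function of the underlying pseudorandom stream. The claim about distributions, conditioned on acceptance, then follows at once. We model $\InitStream(\mathsf{st},\mathsf{seed},\mathsf{nonce})$ as fixing an infinite bit string $\omega$. We model $\SampleGauss$ as a deterministic stateful procedure: on its $t$-th call it consumes a prefix-determined segment of $\omega$ and returns $x_t = G_t(\omega)$. This holds even though $\SampleGauss$ may consume a variable number of bits, for instance through internal rejection. The randomness of both procedures is then exactly the distribution of $\omega$, and it suffices to prove pointwise equality of outputs and of acceptance decisions for every $\omega$.

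\textbf{Step 1 (replay lemma).} Since $\InitStream$ is called with the same $(\mathsf{seed},\mathsf{nonce})$ at the start of Pass~1 and Pass~2, the stream state is reset to the same initial value. A simple induction on the call index $t = iN + j$ then shows that the $t$-th call in Pass~2 returns the same $x_t$ as the $t$-th call in Pass~1. Pass~2 issues exactly the same sequence of $(L+K)N$ calls in the same order: first the $L$ blocks, then the $K$ blocks, matching $i=0,\dots,L+K-1$ in Pass~1. The reference procedure fills its array in this same order from the same stream, so its array entries coincide with these $x_{i,j}$. We must check that $\SampleGauss$ depends only on the stream state, which it does once the streaming Gaussian backend removes the BSS buffers \texttt{tmp\_samples} and \texttt{tmp\_signs}. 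This is the step I expect to be the main obstacle: the argument has to exclude any hidden state carried across calls or across the two passes, and must show that the streaming backend reproduces the reference sampler's output sequence bit for bit. The check is by inspecting that all sampler state lives in $\mathsf{st}$.

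\textbf{Step 2 (scaling and rounding).} Pass~1 computes $S=\sum x_{i,j}^2$ exactly, because the 64-bit accumulator cannot overflow for the bounded Gaussian tails and the given parameters. This equals the reference sum, and integer addition is order-independent anyway. Both procedures apply the same fixed-point $\InvSqrt(S,B_0,\Lambda)$, so they obtain the identical $\alpha$. Each output coordinate is then $\round(\alpha x_{i,j})$ with identical inputs, so $(\mathbf y_1,\mathbf y_2)$ agree coordinatewise.

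\textbf{Step 3 (acceptance and the outer loop).} The acceptance predicate $\lVert(\mathbf y_1,\mathbf y_2)\rVert_2^2\le B_0^2\Lambda^2$ is the same function of identical outputs, so for each $\omega$ the two procedures accept or reject together. In the outer \texttt{repeat} loop, each retry uses a fresh stream, for example by incrementing the nonce as in signing. Applying the pointwise argument per attempt, the two procedures accept at the same attempt with the same output. Pushing forward the law of $\omega$ then gives equality of the conditional distributions of $(\mathbf y_1,\mathbf y_2)$ given acceptance.
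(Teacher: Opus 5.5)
Your proposal is correct and is essentially the paper's argument. Replaying the stream from the same $(\mathsf{seed},\mathsf{nonce})$ yields the identical sequence $(x_{i,j})$, hence the same $S$ and $\alpha=\InvSqrt(S,B_0,\Lambda)$, identical rounded outputs and an identical acceptance decision, so both procedures define the same map from $(\mathsf{seed},\mathsf{nonce})$ to output-or-rejection. Your extra details are harmless because this per-attempt pointwise equality already gives the conditional-distribution claim. These details are the explicit assumption that \SampleGauss{} depends only on the stream state (which the paper simply posits), the overflow remark, and the fresh-nonce treatment of the outer loop (which Algorithm~\ref{alg:two-pass-hyperball} as literally written does not provide).
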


\begin{proof}
	Fix any $\mathsf{seed}$ and $\mathsf{nonce}$.
	Since $\InitStream(\mathsf{st},\mathsf{seed},\mathsf{nonce})$ is deterministic
	and $\SampleGauss(\mathsf{st})$ is a pure function of the stream state,
	restarting the stream with the same $(\mathsf{seed},\mathsf{nonce})$
	reproduces the identical sequence $(x_{i,j})$ in both passes.
	Consequently, the scalar $S = \sum_{i,j} x_{i,j}^2$ and the scaling
	factor $\alpha = \InvSqrt(S, B_0, \Lambda)$ computed in Pass~1 are
	identical to those that would be computed from the full array in the
	one-pass variant.
	The output coefficients $\round(\alpha x_{i,j})$ and the acceptance
	predicate $\lVert(\mathbf{y}_1,\mathbf{y}_2)\rVert_2^2 \le B_0^2\Lambda^2$
	are therefore also identical.
	The two procedures thus define the same mapping from
	$(\mathsf{seed},\mathsf{nonce})$ to $(\mathbf{y}_1,\mathbf{y}_2)$
	(or to rejection), and hence the same conditional distribution given
	acceptance.
\end{proof}

Therefore the only difference is implementation strategy: the two-pass sampler
never stores more than a bounded number of Gaussian samples at any one time.
A further quantitative analysis of the fixed-point approximation error is left to future work.

\paragraph{Streaming Gaussian backend.}
The reference \hae{} sampler allocates temporary arrays
\texttt{samples[N$\cdot$($\ell$+k)]} and \texttt{signs[N$\cdot$($\ell$+k)/8]}
in stack or BSS, requiring $\Theta((\ell+k)N)$ words of transient storage.
Our implementation replaces this with a fully streaming backend:
each discrete Gaussian sample is generated from the SHAKE sponge state,
consumed, and discarded immediately.
Since squeezing one sample at a time produces the identical output as
batch squeezing, this has no effect on the output distribution.
This eliminates all static sampler buffers, reducing the scheme-wide
\texttt{.bss} to zero bytes.



%% file: evaluation.tex
\section{Implementation and Evaluation}
\label{sec:impl:results}

This section evaluates our low-stack \hae{} implementation.
We report primary results on the \texttt{pqm4} benchmarking
framework~\cite{PQM4} (Nucleo-L4R5ZI, ARM Cortex-M4), which enables
direct comparison with the prior work of~\cite{cryptoeprint:2026/442}
and with ML-DSA.
We additionally validate portability under RIOT-OS on two further
targets (nRF52840 and ESP32-C6).

\subsection{Build configuration}
\label{sec:impl:build-config}

Our implementation is controlled by three compile-time knobs:
\begin{center}
\texttt{SAMPLER=2}\quad \texttt{STREAM\_MATRIX=1}\quad \texttt{FROZEN\_A=2}
\end{center}
\texttt{SAMPLER=2} selects the two-pass streaming hyperball sampler
(Section~\ref{sec:sampler});
\texttt{STREAM\_MATRIX=1} regenerates matrix entries on demand from
the public seed; and
\texttt{FROZEN\_A=2} fuses rejection sampling with pointwise
multiplication to eliminate temporary polynomial buffers.
When all three are enabled, the full-streaming signing path
(\texttt{FULL\_STREAM\_SIGN}) is activated automatically, applying
the \emph{Rejection-aware pass decomposition} and \emph{Reverse-order streaming entropy coding} techniques
described in Section~\ref{sec:impl:lowstack-sign}.


\begin{table*}[t]
\centering
\caption{Performance comparison on Nucleo-L4R5ZI (pqm4).
Cycles: \texttt{-O3}, \SI{20}{\mega\hertz}/0\,WS, median of 100 runs ($k = 10^3$).
Stack/size: \texttt{-Os}. Ratio relative to \textbf{ref} (${<}1$: faster).
(C)\,=\,pure C; (asm)\,=\,assembly.
\cite{cryptoeprint:2026/442}: published numbers (HAETAE-5 only).
\textbf{ours\,(C)}/\textbf{ours\,(asm)}: this work.
ML-DSA from \texttt{pqm4}; \textbf{m4fstack} followed \cite{DBLP:conf/africacrypt/BosRS22} strategy.}\label{tab:pqm4-eval}
\footnotesize
\setlength{\tabcolsep}{3pt}
\renewcommand{\arraystretch}{1.10}
\begin{tabular}{l r r r r r r r r r r r}
\toprule
 & \multicolumn{3}{c}{Key generation} & \multicolumn{3}{c}{Signing} & \multicolumn{3}{c}{Verification} & & \\
\cmidrule(lr){2-4}\cmidrule(lr){5-7}\cmidrule(lr){8-10}
Impl. & Cycles & Ratio & Stack & Cycles & Ratio & Stack & Cycles & Ratio & Stack & .text & .total \\
\midrule
\multicolumn{12}{l}{\textbf{HAETAE-2}} \\
\quad ref\,(C)   & \num{9253}\,k   & $1.00\times$ & \num{23844} & \num{43710}\,k  & $1.00\times$ & \num{73112} & \num{1732}\,k  & $1.00\times$ & \num{33448} & \num{26490} & \num{29098} \\
\quad m4f\,(asm)   & \num{6980}\,k   & $0.75\times$ & \num{19772} & \num{18616}\,k  & $0.43\times$ & \num{55684} & \num{998}\,k   & $0.58\times$ & \num{23296} & \num{30104} & \num{30624} \\
\quad \textbf{ours\,(C)}    & \num{11630}\,k  & $1.26\times$ & \num{5848}  & \num{115534}\,k & $2.64\times$ & \num{5968}  & \num{1417}\,k  & $0.82\times$ & \num{4936}  & \num{30416} & \num{30416} \\
\quad \textbf{ours\,(asm)}  & \num{11518}\,k  & $1.24\times$ & \num{5816}  & \num{81416}\,k  & $1.86\times$ & \num{5968}  & \num{1071}\,k  & $0.62\times$ & \num{4824}  & \num{38100} & \num{38100} \\
\midrule
\multicolumn{12}{l}{\textbf{HAETAE-3}} \\
\quad ref\,(C)   & \num{12007}\,k  & $1.00\times$ & \num{41236} & \num{53546}\,k  & $1.00\times$ & \num{113328}& \num{3170}\,k  & $1.00\times$ & \num{54232} & \num{26436} & \num{29204} \\
\quad m4f\,(asm)   & \num{13584}\,k  & $1.13\times$ & \num{29484} & \num{28275}\,k  & $0.53\times$ & \num{83436} & \num{1909}\,k  & $0.60\times$ & \num{31776} & \num{30078} & \num{30758} \\
\quad \textbf{ours\,(C)}    & \num{20998}\,k  & $1.75\times$ & \num{5848}  & \num{181792}\,k & $3.40\times$ & \num{6152}  & \num{2897}\,k  & $0.91\times$ & \num{4840}  & \num{30780} & \num{30780} \\
\quad \textbf{ours\,(asm)}  & \num{21889}\,k  & $1.82\times$ & \num{5920}  & \num{138771}\,k & $2.59\times$ & \num{6152}  & \num{2134}\,k  & $0.67\times$ & \num{4840}  & \num{38464} & \num{38464} \\
\midrule
\multicolumn{12}{l}{\textbf{HAETAE-5}} \\
\quad ref\,(C)   & \num{16491}\,k  & $1.00\times$ & \num{52500} & \num{65896}\,k  & $1.00\times$ & \num{144408}& \num{3968}\,k  & $1.00\times$ & \num{68856} & \num{25658} & \num{28786} \\
\quad m4f\,(asm)   & \num{20115}\,k  & $1.22\times$ & \num{34196} & \num{34995}\,k  & $0.53\times$ & \num{103988}& \num{2532}\,k  & $0.64\times$ & \num{37292} & \num{29756} & \num{30796} \\
\quad \cite{cryptoeprint:2026/442}\,(C) & \num{15328}\,k & $0.93\times$ & \num{5212} & \num{300881}\,k & $4.57\times$ & \num{8092} & \num{4187}\,k & $1.06\times$ & \num{6220} & \num{26494} & \num{27534} \\
\quad \textbf{ours\,(C)}    & \num{11561}\,k  & $0.70\times$ & \num{4816}  & \num{218957}\,k & $3.32\times$ & \num{6136}  & \num{3510}\,k  & $0.88\times$ & \num{4840}  & \num{30150} & \num{30150} \\
\quad \textbf{ours\,(asm)}  & \num{10423}\,k  & $0.63\times$ & \num{4888}  & \num{163125}\,k & $2.48\times$ & \num{6136}  & \num{2736}\,k  & $0.69\times$ & \num{4952}  & \num{37834} & \num{37834} \\
\midrule
\multicolumn{12}{l}{\textbf{ML-DSA-44}} \\
\quad m4f\,(asm)       & \num{1418}\,k  & $1.00\times$  & \num{38312} & \num{3017}\,k   & $1.00\times$  & \num{44832} & \num{1496}\,k  & $1.00\times$  & \num{8912}  & \num{19592} & \num{19592} \\
\quad m4fstack\,(asm)  & \num{1797}\,k  & $1.27\times$  & \num{4408}  & \num{10330}\,k  & $3.42\times$  & \num{5064}  & \num{3816}\,k  & $2.55\times$  & \num{2720}  & \num{24844} & \num{24844} \\
\midrule
\multicolumn{12}{l}{\textbf{ML-DSA-65}} \\
\quad m4f\,(asm)       & \num{2526}\,k  & $1.00\times$  & \num{60840} & \num{5963}\,k   & $1.00\times$  & \num{68896} & \num{2533}\,k  & $1.00\times$  & \num{9888}  & \num{19328} & \num{19328} \\
\quad m4fstack\,(asm)  & \num{3422}\,k  & $1.35\times$  & \num{4408}  & \num{21668}\,k  & $3.63\times$  & \num{6608}  & \num{6771}\,k  & $2.67\times$  & \num{2720}  & \num{24120} & \num{24120} \\
\midrule
\multicolumn{12}{l}{\textbf{ML-DSA-87}} \\
\quad m4f\,(asm)       & \num{4267}\,k  & $1.00\times$  & \num{97704} & \num{7145}\,k   & $1.00\times$  & \num{107912}& \num{4381}\,k  & $1.00\times$  & \num{12064} & \num{19500} & \num{19500} \\
\quad m4fstack\,(asm)  & \num{5770}\,k  & $1.35\times$  & \num{4408}  & \num{27150}\,k  & $3.80\times$  & \num{8144}  & \num{11713}\,k & $2.67\times$  & \num{2720}  & \num{24516} & \num{24516} \\
\bottomrule
\end{tabular}
\end{table*}

\subsection{Results on \texttt{pqm4} framework (Nucleo-L4R5ZI)}
\label{sec:impl:results-pqm4}

All measurements are performed on the
\textbf{NUCLEO-L4R5ZI} board (STM32L4R5ZI, ARM Cortex-M4F,
\SI{2}{\mega\byte} Flash, \SI{640}{\kilo\byte} RAM),
the default \texttt{pqm4} target and the platform used
by~\cite{cryptoeprint:2026/442}.
Speed measurements use \texttt{CLOCK\_BENCHMARK}
(\SI{20}{\mega\hertz}, 0\,WS flash) for cycle-accurate results;
stack measurements use \texttt{CLOCK\_FAST} (\SI{120}{\mega\hertz}).
Stack measurements follow the \texttt{pqm4} framework's built-in
stack-benchmarking infrastructure.
Compiler optimization follows the \texttt{pqm4} convention:
\texttt{-Os} for stack and code-size measurements,
\texttt{-O3} for cycle-count benchmarks (separate builds).
We report the median over 100 executions.
All reported stack peaks are empirical measurements obtained with
\texttt{arm-none-eabi-gcc}~11.3.1 under the default \texttt{pqm4}
build configuration (\texttt{-Os}, no link-time optimization).
The \texttt{noinline} pass boundaries rely on compiler support for
the \texttt{\_\_attribute\_\_((noinline))} annotation.

We benchmark four \hae{} configurations:
\textbf{ref\,(C)} (latest C reference code~\cite{HAETAE_Final_spec}),
\textbf{m4f\,(asm)} (pqm4 assembly, older spec~\cite{cheon2024haetae}),
\textbf{ours\,(C)} (this work, pure C), and
\textbf{ours\,(asm)} (this work with assembly NTT and sampler from~\cite{cheon2024haetae}).
Both \textbf{ours} variants use the build knobs from
Section~\ref{sec:impl:build-config}.
Since the implementation of~\cite{cryptoeprint:2026/442} is not publicly
available, we report their published numbers directly.
Table~\ref{tab:pqm4-eval} presents the results.

\paragraph{Stack usage.}
Compared to the reference, \textbf{ours\,(C)} reduces
peak Signing stack by 91.8--95.8\,\% across all levels, bringing it
to \SI{5968}{\byte} (HAETAE-2), \SI{6152}{\byte} (HAETAE-3), and
\SI{6136}{\byte} (HAETAE-5).
Verification stack ranges from \SI{4840}{\byte} to \SI{4936}{\byte},
an 85--93\,\% reduction.
Key generation ranges from \SI{4816}{\byte} (HAETAE-5) to
\SI{5848}{\byte} (HAETAE-2/3).
\textbf{ours\,(asm)} achieves comparable stack across all operations;
the small differences (e.g.\ \SI{4824}{\byte} vs.\ \SI{4936}{\byte}
for HAETAE-2 verification) reflect differing callee frame sizes in the
assembly routines.
Note that the \textbf{m4f} implementation is based on an older version
of the specification and includes ARM assembly optimizations, so a
direct comparison with our implementation is not entirely fair;
we include it for completeness.
For HAETAE-5, the only level reported
by~\cite{cryptoeprint:2026/442}, our implementation achieves lower
stack across all three operations:
\begin{itemize}
\item Key generation: \SI{4816}{\byte} vs.\ \SI{5212}{\byte}
  ($-7.6$\,\%).
  Caller-level union analysis
  (Section~\ref{sec:impl:lowstack-keygen}) packs the FFT workspace
  and sampling scratch into shared memory slots.
\item Signing: \SI{6136}{\byte} vs.\ \SI{8092}{\byte} ($-24$\,\%).
  The pass decomposition with \texttt{noinline} boundaries
  (Section~\ref{sec:impl:lowstack-sign}) ensures the peak is bounded by
  $S_{\mathrm{driver}} + \max(S_A, S_B, S_{C_1}, S_{C_2})$, and the
  \emph{Reverse-order streaming entropy coding} eliminates the full hint and
  $\textsf{HB}^{z_1}$ staging buffers.
\item Verification: \SI{4840}{\byte} vs.\ \SI{6220}{\byte}
  ($-22$\,\%).
  Our row-streamed design
  (Section~\ref{sec:impl:lowstack-verify}) replaces the
  column-streamed \texttt{polyveck} accumulator with a single
  polynomial, combined with view-style decoding and incremental
  transcript hashing.
\end{itemize}
Both variants have \texttt{.data}\,$=$\,0 and
\texttt{.bss}\,$=$\,0, whereas~\cite{cryptoeprint:2026/442} reports
a \SI{1040}{\byte} gap between \texttt{.text} and \texttt{.total},
whose breakdown is not disclosed.

\paragraph{Cycle counts.}
Key generation for HAETAE-5 is 30\,\% faster than the reference
in \textbf{ours\,(C)} ($0.70\times$) and 37\,\% faster with \textbf{ours\,(asm)}
($0.63\times$).
Since $d{=}0$ key generation
(Algorithm~\ref{alg:keygen_d_eq_0}) performs the singular-value norm
rejection before expanding the matrix $\mathbf{A}$,
with an acceptance rate of roughly 10\,\%, most iterations only
execute the lightweight norm check, and the matrix--vector product
runs only once after acceptance.
For HAETAE-2/3 ($d{>}0$), the norm check is interleaved with the
matrix--vector product, so each rejected sample also pays the
streaming matrix cost, resulting in $1.24$--$1.82\times$ slower
key generation.

Signature generation requires $1.86$--$3.40\times$ the cycles of the
reference (\textbf{ours\,(C)}: $2.64$--$3.40\times$; \textbf{ours\,(asm)}: $1.86$--$2.59\times$),
reflecting the cost of seed-based recomputation in every pass.
For HAETAE-5, \textbf{ours\,(C)} achieves \num{218957}\,k cycles,
27\,\% faster than \cite{cryptoeprint:2026/442} (\num{300881}\,k),
with 24\,\% lower stack;
\emph{Component-level early rejection} (Section~\ref{sec:impl:lowstack-sign})
reduces the average cost of rejected iterations by skipping
unnecessary multiply-accumulate operations, contributing to this
speedup.
\textbf{Ours\,(asm)} further reduces this to \num{163125}\,k ($-46$\,\%
vs.\ \cite{cryptoeprint:2026/442}).

Verification is faster than the reference at all levels
for both variants
(\textbf{ours\,(C)}: $0.82$--$0.91\times$; \textbf{ours\,(asm)}: $0.62$--$0.69\times$).
For HAETAE-5, \cite{cryptoeprint:2026/442}~achieves $1.06\times$
with \SI{6220}{\byte} stack,
whereas \textbf{ours\,(C)} achieves $0.88\times$ with
\SI{4840}{\byte}, and \textbf{ours\,(asm)} reaches $0.69\times$.
We attribute the speedup primarily to the smaller working set,
which reduces memory traffic on the Cortex-M4.

\paragraph{Code size.}
\textbf{Ours\,(C)} shows a modest \texttt{.text} increase compared to
the reference (\num{30150} vs.\ \num{28786} for HAETAE-5, $+4.7$\,\%),
reflecting the streaming rANS encoder, fused multiplication, and pass
decomposition logic.
\textbf{Ours\,(asm)} is larger
(\num{37834} for HAETAE-5, $+31$\,\% vs.\ ref)
due to the inlined NTT and CDT sampler routines.
Both variants place all constant tables in read-only memory,
resulting in \texttt{.data}\,$=$\,0 and \texttt{.bss}\,$=$\,0;
the reported \texttt{.text} and \texttt{.total} are therefore
identical.
ML-DSA scheme-only code sizes (\SIrange{19}{24}{\kilo\byte}) are
smaller than \hae{}, reflecting the simpler structure of ML-DSA.

\subsection{Results on RIOT-OS (nRF52840 and ESP32-C6)}
\label{sec:impl:results-riot}

\begin{table*}[t]
\centering
\caption{%
  Low-stack evaluation on RIOT-OS (\textbf{this work}).
  nRF52840: ARM Cortex-M4F @ \SI{64}{\mega\hertz}, \SI{256}{\kilo\byte} RAM\@.
  ESP32-C6: RISC-V RV32IMAC @ \SI{80}{\mega\hertz}, \SI{512}{\kilo\byte} RAM\@.
  Cycles measured via \texttt{xtimer} function from RIOT-OS.}
  \label{tab:haetae-riot-eval}
\small
\setlength{\tabcolsep}{3pt}
\renewcommand{\arraystretch}{1.10}
\begin{tabular}{l r r r r r r}
\toprule
 & \multicolumn{2}{c}{Key generation} & \multicolumn{2}{c}{Signing} & \multicolumn{2}{c}{Verification} \\
\cmidrule(lr){2-3}\cmidrule(lr){4-5}\cmidrule(lr){6-7}
Platform & Cycles & Stack & Cycles & Stack & Cycles & Stack \\
\midrule
\multicolumn{7}{l}{\textbf{HAETAE-2}} \\
\quad nRF52840 &
  \num{29039}\,k & \num{5988} &
  \num{262721}\,k & \num{6096}  &
  \num{2023}\,k  & \num{3984}  \\
\quad ESP32-C6  &
  \num{49371}\,k & \num{6100} &
  \num{476223}\,k & \num{6068}  &
  \num{3619}\,k  & \num{4004} \\
\midrule
\multicolumn{7}{l}{\textbf{HAETAE-3}} \\
\quad nRF52840 &
  \num{46520}\,k & \num{6208} &
  \num{829928}\,k & \num{6288}  &
  \num{3838}\,k  & \num{3984}  \\
\quad ESP32-C6  &
  \num{80811}\,k & \num{6580} &
  \num{1519566}\,k & \num{6260}  &
  \num{7146}\,k  & \num{4004} \\
\midrule
\multicolumn{7}{l}{\textbf{HAETAE-5}} \\
\quad nRF52840 &
  \num{13612}\,k & \num{5096} &
  \num{193913}\,k & \num{6406}  &
  \num{4970}\,k  & \num{3976}  \\
\quad ESP32-C6  &
  \num{21513}\,k & \num{5254} &
  \num{526047}\,k & \num{6244}  &
  \num{9536}\,k  & \num{4004}  \\
\bottomrule
\end{tabular}
\end{table*}

To validate portability beyond the \texttt{pqm4} bare-metal environment, we ran
the same low-stack build under RIOT-OS~\cite{riot:home} on two targets:
the \textbf{Nordic nRF52840} (ARM Cortex-M4F, \SI{64}{\mega\hertz},
\SI{256}{\kilo\byte} RAM) and the \textbf{Espressif ESP32-C6}
(RISC-V RV32IMAC, \SI{80}{\mega\hertz}, \SI{512}{\kilo\byte} RAM).
Stack usage is measured via a high-watermark technique: the unused
stack region below the current stack pointer is painted with a canary
pattern before each operation, and the deepest overwrite is recorded
after the operation returns.
Timing uses RIOT's \texttt{xtimer} layer converted to cycles via
\texttt{coreclock\_hz}.
Results are averaged over 100 iterations.

As Table~\ref{tab:haetae-riot-eval} shows, peak stack figures are
consistent between the two RIOT-OS targets.
For HAETAE-2, signature generation peaks at \SI{6096}{\byte} (nRF52840) and
\SI{6068}{\byte} (ESP32-C6),
confirming that the memory footprint is determined by the algorithm
and build knobs, not by the ISA or OS layer.

\paragraph{Cycle-count gap between Cortex-M4F and RV32IMAC.}
Despite a 25\,\% higher clock frequency, the ESP32-C6 requires
substantially more cycles for every operation.
The slowdown is most pronounced for signing and other operations,
the main reason is the absence of DSP/SIMD hardware on the in-order
RV32IMAC core: the Cortex-M4F benefits from a single-cycle 32-bit
multiplier suited to the NTT-heavy polynomial arithmetic in \hae{}.

\begin{table}[t]
  \centering
  \caption{RAM budget for Verification and Signing (bytes).
  Verify total: $|\textit{sig}| + \text{stack}$ (public key in flash).
  Sign total: $|\textit{sk}| + |\textit{sig}| + \text{stack}$.
  Cycles: $k = 10^3$, from \texttt{pqm4} (Table~\ref{tab:pqm4-eval}).}
  \label{tab:ram-budget}
  \resizebox{\linewidth}{!}{
  \setlength{\tabcolsep}{3.5pt}
  \begin{tabular}{lrrrrrrrrr}
    \toprule
    & & & & \multicolumn{3}{c}{Verification} &
    \multicolumn{3}{c}{Signing} \\
    \cmidrule(lr){5-7}\cmidrule(lr){8-10}
    Scheme & $|\textit{pk}|$ & $|\textit{sk}|$ & $|\textit{sig}|$ &
    Stack & Total & Cycles &
    Stack & Total & Cycles \\
    \midrule
    \hae{}-2 (\textbf{ours\,(C)}) & 992 & 1\,408 & 1\,474 & 4\,936 & 6\,410 & \num{1417}\,k &
    5\,968 & 8\,850 & \num{115534}\,k \\
    \hae{}-3 (\textbf{ours\,(C)}) & 1\,472 & 2\,112 & 2\,349 & 4\,840 & 7\,189 & \num{2897}\,k &
    6\,152 & 10\,613 & \num{181792}\,k \\
    \hae{}-5 (\textbf{ours\,(C)}) & 2\,080 & 2\,752 & 2\,948 & 4\,840 & 7\,788 & \num{3510}\,k &
    6\,136 & 11\,836 & \num{218957}\,k \\
    \midrule
    \hae{}-5~\cite{cryptoeprint:2026/442} & 2\,080 & 2\,752 & 2\,948 & 6\,220 & 9\,168 & \num{4187}\,k &
    8\,092 & 13\,792 & \num{300881}\,k \\
    \midrule
    ML-DSA-44 (m4fstack) & 1\,312 & 2\,560 & 2\,420 & 2\,720 & 5\,140 & \num{3816}\,k &
    5\,064 & 10\,044 & \num{10330}\,k \\
    ML-DSA-65 (m4fstack) & 1\,952 & 4\,032 & 3\,309 & 2\,720 & 6\,029 & \num{6771}\,k &
    6\,608 & 13\,949 & \num{21668}\,k \\
    ML-DSA-87 (m4fstack) & 2\,592 & 4\,896 & 4\,627 & 2\,720 & 7\,347 & \num{11713}\,k &
    8\,144 & 17\,667 & \num{27150}\,k \\
    \bottomrule
  \end{tabular}
  }
\end{table}

%% file: conclusion.tex
\section{Discussion}
\label{sec:discussion}

To provide a fair basis for comparison, the discussion below is based
exclusively on \texttt{pqm4} measurements
(Table~\ref{tab:pqm4-eval}) using the same board, clock
configuration, and measurement infrastructure for all schemes.

\paragraph{Verification on 8\,KB devices.}
On many constrained platforms, the public key is stored in flash memory,
so the RAM budget for verification consists of the received
signature plus the stack required to run the algorithm
(Table~\ref{tab:ram-budget}), as is the case in,
e.g.,~\cite{acns/BanegasZBHS22,cryptoeprint:2023/965}.
On devices with as little as \SI{8}{\kilo\byte} of SRAM, the
available RAM is shared among \texttt{.data}, \texttt{.bss}, and the
runtime stack, meaning that static data sections directly reduce the
stack budget.
Our implementation has \texttt{.data}\,$=$\,0 and
\texttt{.bss}\,$=$\,0, leaving the full RAM available for the stack.
In contrast, \cite{cryptoeprint:2026/442}~reports a
\SI{1040}{\byte} gap between \texttt{.text} and \texttt{.total},
but does not disclose the section-level breakdown.
\textbf{Ours\,(C)} fits all three \hae{} security levels within
\SI{8}{\kilo\byte} (\SI{6410}{\byte} to \SI{7788}{\byte}),
whereas~\cite{cryptoeprint:2026/442} requires \SI{9168}{\byte}
for \hae{}-5 verification alone.
Compared to ML-DSA m4fstack, \hae{} verification
is $2.34$--$3.34\times$ faster at each comparable security level
(\num{1417}\,k vs.\ \num{3816}\,k at level~2,
\num{2897}\,k vs.\ \num{6771}\,k at level~3,
\num{3510}\,k vs.\ \num{11713}\,k at level~5).
This makes \hae{} attractive for signature-verification-centric
deployments such as firmware authentication and IoT attestation.

\paragraph{Signature and key generation on 16\,KB devices.}
For signing, the device must hold the secret key, the output
signature, and the signing working set simultaneously
(Table~\ref{tab:ram-budget}).
All \hae{} levels fit within \SI{16}{\kilo\byte}
(\SI{8850}{\byte} to \SI{11836}{\byte}),
while ML-DSA-87 exceeds this budget (\SI{17667}{\byte}) due to the
large secret key (\SI{4896}{\byte}) and signing stack
(\SI{8144}{\byte}).
At security level~5, \hae{}-5 requires \SI{11836}{\byte} for
signing, roughly \SI{5.7}{\kilo\byte} less than ML-DSA-87.
Key generation is comfortable for both schemes: our implementation
requires at most \SI{5848}{\byte} of stack, well within the
\SI{16}{\kilo\byte} budget.

%% file: appendix.tex
\section{Key Generation}
\label{sec:app:keygen}
\begin{figure}[H]
  \centering
  \begin{minipage}[t]{0.48\textwidth}
    \begin{algorithm}[H]
      \caption{$\text{KeyGen}(1^\lambda)$ for $d > 0$}
      \label{alg:keygen_d_gt_0}
      \footnotesize
      \begin{algorithmic}[1]
        \Ensure $(\pk, \sk)$
        \State $\text{seed} \gets \{0,1\}^{\rho_0}$
        \State $(\text{seed}_\Apo, \text{seed}_\sk, K) \gets H_{\text{gen}}(\text{seed})$
        \State $(\apo_{\text{gen}}, \widehat{\Apo}_{\text{gen}}) \gets \textsf{ExpandA}_d(\text{seed}_\Apo)$
        \State $(\text{counter}_{\sk}, \text{flag}) \gets (0, \text{true})$
        \While{$\text{flag}$}
          \State $(\spo_{\text{gen}}, \epo_{\text{gen}}) \gets \textsf{ExpandS}(\text{seed}_\sk, \text{counter}_\sk)$
          \State $\bpo \gets \apo_{\text{gen}} + \textsf{NTT}^{-1}(\widehat{\Apo}_{\text{gen}} \circ \textsf{NTT}(\spo_{\text{gen}})) + \epo_{\text{gen}} \bmod q$
          \State $(\bpo_0, \bpo_1) \gets (\textsf{LowBits}^{\pk}(\bpo), \textsf{HighBits}^{\pk}(\bpo))$
          \State $(\spo_1, \spo_2) \gets (\spo_{\text{gen}}, \epo_{\text{gen}} - \bpo_0)$
          \State $\text{counter}_{\sk} \gets \text{counter}_{\sk} + 1$
          \If{$\mathcal{N}(\spo_1, \spo_2) \le \gamma^2 n$}
            \State $\text{flag} \gets \text{false}$
          \EndIf
        \EndWhile
        \State $tr \gets H(\text{seed}_\Apo, \bpo_1)$
        \State {\textbf{return} $(\pk = (\text{seed}_\Apo, \bpo_1), \sk = (\spo_1, \spo_2, K, tr, \text{seed}_\Apo, \bpo_1))$}
      \end{algorithmic}
    \end{algorithm}
  \end{minipage}
  \hfill
  \begin{minipage}[t]{0.48\textwidth}
    \begin{algorithm}[H]
      \caption{$\text{KeyGen}(1^\lambda)$ for $d = 0$}
      \label{alg:keygen_d_eq_0}
      \footnotesize
      \begin{algorithmic}[1]
        \Ensure $(\pk, \sk)$
        \State $\text{seed} \gets \{0,1\}^{\rho_0}$
        \State $(\text{seed}_\Apo, \text{seed}_\sk, K) \gets H_{\text{gen}}(\text{seed})$
        \State $(\text{counter}_{\sk}, \text{flag}) \gets (0, \text{true})$
        \While{$\text{flag}$}
          \State $(\spo_{\text{gen}}, \epo_{\text{gen}}) \gets \textsf{ExpandS}(\text{seed}_\sk, \text{counter}_\sk)$
          \State $(\spo_1, \spo_2) \gets (\spo_{\text{gen}}, \epo_{\text{gen}})$
          \State $\text{counter}_{\sk} \gets \text{counter}_{\sk} + 1$
          \If{$\mathcal{N}(\spo_1, \spo_2) \le \gamma^2 n$}
            \State $\text{flag} \gets \text{false}$
          \EndIf
        \EndWhile
        \State $\widehat{\Apo}_{\text{gen}} \gets \textsf{ExpandA}_d(\text{seed}_\Apo)$
        \State $\widehat{\bpo} \gets -2 \left( \widehat{\Apo}_{\text{gen}} \circ \textsf{NTT}(\spo_{\text{gen}}) + \textsf{NTT}(\epo_{\text{gen}}) \right) \bmod q$
        \State $tr \gets H(\text{seed}_\Apo, \widehat{\bpo})$
        \State {\textbf{return} $(\pk = (\text{seed}_\Apo, \widehat{\bpo}), \sk = (\spo_1, \spo_2, K, tr, \text{seed}_\Apo, \widehat{\bpo}))$}
      \end{algorithmic}
    \end{algorithm}
  \end{minipage}
  \caption{Implementation specification of HAETAE key generation~\cite{NISTPQC-ADD-R1:HAETAE23,HAETAE_Final_spec}: $d > 0$ applies to HAETAE-2 and HAETAE-3, while $d = 0$ applies to HAETAE-5.}
  \label{fig:keygen-d-comparison}
\end{figure}

\begin{figure}[H]
    \centering
    \resizebox{0.88\textwidth}{!}{
        \begin{tabular}{c p{0.05in} c p{0.05in} c p{0.05in} p{0.25in} p{0.05in} l}
            $\mid \xleftrightarrow{\;\; 128 \;\;}\mid$ & &
            $\mid \xleftrightarrow{\;\; 1024 \;\;}\mid$ & &
            $\mid \xleftrightarrow{\;\; 2048 \;\;}\mid$ & & & & \\
            \cline{1-1}
            \multicolumn{1}{|c|}{$\rho, \sigma, K$} & & & & & & & &
            \stepn{0} $\rho, \sigma, K := H_{\text{gen}}(\text{seed})$ \\
            \multicolumn{1}{|c|}{} & & & & & & & &
            {\scriptsize\textbf{--- Pass 1: norm check ---}} \\
            \cline{3-3}
            \multicolumn{1}{|c|}{} & &
            \multicolumn{1}{|c|}{$\text{sum}{=}0$} & & & &
            \tikzmark{hfrej2}{} & \tikzmark{hfrej3}{} &
            \stepn{1} $\text{sum} \gets 0$ \\
            \cline{5-5}
            \multicolumn{1}{|c|}{} & &
            \multicolumn{1}{|c|}{} & &
            \multicolumn{1}{|c|}{$\spo_{1,j}$} & &
            \tikzmark{hfp1s2}{} & \tikzmark{hfp1s3}{} &
            \stepn{2} $\spo_{1,j} \gets \textsf{ExpandS}(\sigma, j)$ \\
            \cdashline{5-5}[1pt/1pt]
            \multicolumn{1}{|c|}{} & &
            \multicolumn{1}{|c|}{$\text{}$} & &
            \multicolumn{1}{|c|}{$\textsf{FFT}(\spo_{1,j})$} & &
            \tikzmark{hfp1s1}{} & \tikzmark{hfp1s0}{} &
            \stepn{3} $\text{sum} \mathrel{+}= \|\spo_{1,j}\|^2$ \quad {\scriptsize(in-place FFT)} \\
            \cline{5-5}
            \multicolumn{1}{|c|}{} & &
            \multicolumn{1}{|c|}{} & &
            \multicolumn{1}{|c|}{$\spo_{2,i}$} & &
            \tikzmark{hfp1e2}{} & \tikzmark{hfp1e3}{} &
            \stepn{4} $\spo_{2,i} \gets \textsf{ExpandS}(\sigma, i)$ \\
            \cdashline{5-5}[1pt/1pt]
            \multicolumn{1}{|c|}{} & &
            \multicolumn{1}{|c|}{$\text{}$} & &
            \multicolumn{1}{|c|}{$\textsf{FFT}(\spo_{2,i})$} & &
            \tikzmark{hfp1e1}{} & \tikzmark{hfp1e0}{} &
            \stepn{5} $\text{sum} \mathrel{+}= \|\spo_{2,i}\|^2$ \quad {\scriptsize(in-place FFT)} \\
            \cline{3-3} \cline{5-5}
            \multicolumn{1}{|c|}{} & &
            \multicolumn{1}{|c|}{$\mathcal{N}(\text{sum})$} & & & &
            \tikzmark{hfrej1}{} & \tikzmark{hfrej0}{} &
            \stepn{6} reject if $\mathcal{N}(\text{sum}) > \gamma^2 N$ \\
            \cline{3-3}
            \multicolumn{1}{|c|}{} & &
            \multicolumn{1}{|c|}{$\widehat{\bpo}{=}0$} & & & & & &
            {\scriptsize\textbf{--- Pass 2: compute $\widehat{\bpo}$ ---}} \\
            \cline{5-5}
            \multicolumn{1}{|c|}{} & &
            \multicolumn{1}{|c|}{} & &
            \multicolumn{1}{|c|}{$\spo_{1,j}$} & &
            \tikzmark{hfp2c2}{} & \tikzmark{hfp2c3}{} &
            \stepn{7} $\spo_{1,j} \gets \textsf{ExpandS}(\sigma, j)$ {\scriptsize(re-sample)} \\
            \multicolumn{1}{|c|}{} & &
            \multicolumn{1}{|c|}{} & &
            \multicolumn{1}{|c|}{} & & & &
            \stepn{8} \textbf{write $\spo_{1,j}$ to $\sk$} {\scriptsize(where $i{=}0$)} \\
            \cdashline{5-5}[1pt/1pt]
            \multicolumn{1}{|c|}{} & &
            \multicolumn{1}{|c|}{} & &
            \multicolumn{1}{|c|}{$\widehat{\spo}_{1,j}$} & & & &
            \stepn{9} $\widehat{\spo}_{1,j} \gets \textsf{NTT}(\spo_{1,j})$ \\
            \cdashline{3-3}[1pt/1pt]
            \multicolumn{1}{|c|}{} & &
            \multicolumn{1}{|c|}{$\widehat{\bpo}$} & &
            \multicolumn{1}{|c|}{} & &
            \tikzmark{hfp2c1}{} & \tikzmark{hfp2c0}{} &
            \stepn{10} $\widehat{\bpo} \mathrel{+}= \widehat{\Apo}_{i,j} \circ \widehat{\spo}_{1,j}$ \quad {\scriptsize(streamed $\widehat{\Apo}_{i,j}$)} \\
            \cdashline{3-3}[1pt/1pt] \cline{5-5}
            \multicolumn{1}{|c|}{} & &
            \multicolumn{1}{|c|}{$\widehat{\bpo}$} & & & & & &
            \stepn{11} $\widehat{\bpo} \gets \textsf{fromMont}(\widehat{\bpo})$ \\
            \cline{5-5}
            \multicolumn{1}{|c|}{} & &
            \multicolumn{1}{|c|}{} & &
            \multicolumn{1}{|c|}{$\spo_{2,i}$} & & & &
            \stepn{12} $\spo_{2,i} \gets \textsf{ExpandS}(\sigma, i)$ \\
            \multicolumn{1}{|c|}{} & &
            \multicolumn{1}{|c|}{} & &
            \multicolumn{1}{|c|}{} & & & &
            \stepn{13} \textbf{write $\spo_{2,i}$ to $\sk$} \\
            \cdashline{5-5}[1pt/1pt]
            \multicolumn{1}{|c|}{} & &
            \multicolumn{1}{|c|}{} & &
            \multicolumn{1}{|c|}{$\widehat{\spo}_{2,i}$} & & & &
            \stepn{14} $\widehat{\spo}_{2,i} \gets \textsf{NTT}(\spo_{2,i})$ \\
            \cdashline{3-3}[1pt/1pt]
            \multicolumn{1}{|c|}{} & &
            \multicolumn{1}{|c|}{$\widehat{\bpo}$} & &
            \multicolumn{1}{|c|}{} & & & &
            \stepn{15} $\widehat{\bpo} \gets \widehat{\bpo} + \widehat{\spo}_{2,i}$ \\
            \cdashline{3-3}[1pt/1pt]
            \multicolumn{1}{|c|}{} & &
            \multicolumn{1}{|c|}{$-2\widehat{\bpo}$} & &
            \multicolumn{1}{|c|}{} & & & &
            \stepn{16} $\widehat{\bpo} \gets -2\widehat{\bpo}$ \\
            \cline{3-3} \cline{5-5}
            \multicolumn{1}{|c|}{} & & & & & &
            \tikzmark{hfp2r1}{} & \tikzmark{hfp2r0}{} &
            \stepn{17} \textbf{write $\widehat{\bpo}$ to $\pk$} {\scriptsize(NTT domain)} \\
            \cline{1-1}
            & & & & & & & &
            \stepn{18} $\textsf{pack\_remain\_pk}(\rho), \textsf{pack\_remain\_sk}(K)$ \\
        \end{tabular}
        \begin{tikzpicture}[remember picture, overlay, >=stealth, shift={(0,0)}, thick]
            \draw[rounded corners,->] ([yshift=\tikzoffset, xshift=5pt] hfp1s0.east) -- ([yshift=\tikzoffset, xshift=5pt] hfp1s1.east) -- ([yshift=\tikzoffset, xshift=5pt] hfp1s2.east) -- ([yshift=\tikzoffset, xshift=5pt] hfp1s3.east) node [midway, above, sloped, xshift=-4pt, yshift=-2pt] () {\ ${\scriptstyle 0\le j < \ell}$};
            \draw[rounded corners,->] ([yshift=\tikzoffset, xshift=5pt] hfp1e0.east) -- ([yshift=\tikzoffset, xshift=5pt] hfp1e1.east) -- ([yshift=\tikzoffset, xshift=5pt] hfp1e2.east) -- ([yshift=\tikzoffset, xshift=5pt] hfp1e3.east) node [midway, above, sloped, xshift=-4pt, yshift=-2pt] () {\ ${\scriptstyle 0\le i < k}$};
            \draw[rounded corners, dashed, ->] ([yshift=\tikzoffset, xshift=5pt] hfrej0.east) -- ([yshift=\tikzoffset, xshift=-10pt] hfrej1.east) -- ([yshift=\tikzoffset, xshift=-10pt] hfrej2.east) -- ([yshift=\tikzoffset, xshift=5pt] hfrej3.east) node [midway, above, sloped, xshift=2pt] () {\ ${\scriptstyle \text{reject}}$};
            \draw[rounded corners,->] ([yshift=\tikzoffset, xshift=5pt] hfp2c0.east) -- ([yshift=\tikzoffset, xshift=5pt] hfp2c1.east) -- ([yshift=\tikzoffset, xshift=5pt] hfp2c2.east) -- ([yshift=\tikzoffset, xshift=5pt] hfp2c3.east) node [midway, above, sloped, xshift=-4pt, yshift=-2pt] () {\ ${\scriptstyle 0\le j < \ell}$};
            \draw[rounded corners,->] ([yshift=\tikzoffset, xshift=5pt] hfp2r0.east) -- ([yshift=\tikzoffset] hfp2r1.east) -- ([yshift=\tikzoffset] hfp2c2.east) node [midway, right, rotate=-90, anchor=south, yshift=-2pt] () {\ ${\scriptstyle 0\le i < k}$} -- ([yshift=\tikzoffset] hfp2c3.east);
        \end{tikzpicture}}
    \caption{
        Memory allocation of \hae{}-5 Key generation ($d{=}0$).
        Two caller-level unions share memory across passes:
        \texttt{union\{sum[N]; poly b\}} (1\,024\,B) and
        \texttt{union\{fft\_input[FFT\_N]; poly s\}} (2\,048\,B).
        Pass~1 checks the norm before expanding $\widehat{\Apo}$;
        Pass~2 computes $\widehat{\bpo} = \textsf{NTT}(-2\bpo)$.
    }
    \label{fig:haetae5-keygen-stream}
\end{figure}

\section{Signature Generation}
\label{sec:app:sign}
\begin{algorithm}[H]
  \caption{Implementation Specification of HAETAE Signing~\cite{NISTPQC-ADD-R1:HAETAE23,HAETAE_Final_spec}}
  \label{alg:sign}  
  \begin{algorithmic}[1]
    \Require $\sk = (\spo_1, \spo_2, K, tr, \text{seed}_\Apo, \psi)$, message $M$
    \Ensure signature $\sigma$
    \State $\widehat{\Apo} \gets \textsf{UnpackA}_d(\text{seed}_\Apo, \psi)$
    \State $\mu \gets H_{\texttt{gen}}(tr, M)$
    \State $\text{seed}_{ybb} \gets H_{\texttt{gen}}(K, \mu)$
    \State $(\kappa, \sigma) \gets (0, \bot)$
    \While{$\sigma = \bot$}
      \State $(\mathbf{y}_1, \mathbf{y}_2, b, b', \kappa) \gets \textsf{ExpandYbb}(\text{seed}_{ybb}, \kappa)$
      \State $\mathbf{w} \gets \textsf{NTT}^{-1}(\widehat{\Apo} \circ \textsf{NTT}(\lfloor \mathbf{y}_1 \rceil)) + 2 \cdot \lfloor \mathbf{y}_2 \rceil \bmod q$
      \State $\mathbf{w}' \gets \textsf{fromCRT}(\mathbf{w}, \lfloor y_{1,1} \rceil)$
      \State $\mathbf{w}'_1 \gets \textsf{HighBits}^h(\mathbf{w}')$
      \State $\rho \gets H(\mathbf{w}'_1, \textsf{LSB}(\lfloor y_{1,1} \rceil), \mu)$
      \State $c \gets \textsf{SampleBinaryChallenge}_\tau(\rho)$
      \State $\widehat{c} \gets \textsf{NTT}(c)$
      \State $z_{1,1} \gets y_{1,1} + (-1)^b \cdot c$
      \State $(\mathbf{z}_1)_{2..\ell} \gets (\mathbf{y}_1)_{2..\ell} + (-1)^b \textsf{NTT}^{-1}(\widehat{c} \circ \widehat{\spo}_1)$
      \State $\mathbf{z}_2 \gets \mathbf{y}_2 + (-1)^b \textsf{NTT}^{-1}(\widehat{c} \circ \widehat{\spo}_2)$
      \If{$\|(\mathbf{z}_1, \mathbf{z}_2)\|_2 < B'$ \textbf{and} $(\|2(\mathbf{z}_1, \mathbf{z}_2) - (\mathbf{y}_1, \mathbf{y}_2)\|_2 > B$ \textbf{or} $b' = 1)$}
        \State $\mathbf{h} \gets \mathbf{w}'_1 - \textsf{HighBits}^h(\mathbf{w}' - 2\lfloor \mathbf{z}_2 \rceil) \bmod^+ \frac{2(q-1)}{\alpha_h}$
        \State $\sigma \gets \textsf{PackSig}(\textsf{HighBits}^{z_1}(\lfloor \mathbf{z}_1 \rceil), \textsf{LowBits}^{z_1}(\lfloor \mathbf{z}_1 \rceil), \mathbf{h}, c)$
      \EndIf
    \EndWhile
  \end{algorithmic}
\end{algorithm}

\section{Verification}
\label{sec:app:verify}
\begin{algorithm}[H]
  \caption{Implementation Specification of HAETAE Verification~\cite{NISTPQC-ADD-R1:HAETAE23,HAETAE_Final_spec}}
  \label{alg:verify}  
  \begin{algorithmic}[1]
    \Require $\pk = (\text{seed}_\Apo, \psi)$, message $M$, signature $\sigma$
    \Ensure Accept or Reject
    \State $\widehat{\Apo} \gets \textsf{UnpackA}_d(\text{seed}_\Apo, \psi)$
    \State $(\textsf{HighBits}^{z_1}(\lfloor \mathbf{z}_1 \rceil),\, \textsf{LowBits}^{z_1}(\lfloor \mathbf{z}_1 \rceil),\, \mathbf{h},\, c) \gets \textsf{UnpackSig}(\sigma)$ \Comment{Can fail and rejection}
    \State $\tilde{\mathbf{z}}_1 \gets \textsf{HighBits}^{z_1}(\lfloor \mathbf{z}_1 \rceil) \cdot 256 + \textsf{LowBits}^{z_1}(\lfloor \mathbf{z}_1 \rceil)$
    \State $w' \gets \textsf{LSB}(\tilde{z}_{1,1} - c)$
    \State $\tilde{\mathbf{w}} \gets \widehat{\Apo} \circ \textsf{NTT}(\tilde{\mathbf{z}}_1) \bmod q$
    \State $\tilde{\mathbf{w}}' \gets \textsf{fromCRT}(\tilde{\mathbf{w}}, w')$
    \State $\tilde{\mathbf{w}}'_1 \gets \tilde{\mathbf{h}} + \textsf{HighBits}^h(\tilde{\mathbf{w}}') \bmod^+ \frac{2(q-1)}{\alpha_h}$
    \State $\tilde{\mathbf{z}}_2 \gets [\tilde{\mathbf{w}}'_1 \cdot \alpha_h + w'\mathbf{j} - \tilde{\mathbf{w}}' \bmod 2q]\,/\,2$ \Comment{Addition with $w'$ only for first vector element}
    \State $\tilde{\mu} \gets H_{\texttt{gen}}(\text{seed}_\Apo, \psi, M)$
    \State \textbf{return} $(c = \textsf{SampleBinaryChallenge}_\tau(H(\tilde{\mathbf{w}}'_1, w', \tilde{\mu}))) \;\wedge\; (\|(\tilde{\mathbf{z}}_1, \tilde{\mathbf{z}}_2)\| < B'')$
  \end{algorithmic}
\end{algorithm}